\newcommand*{\QEDB}{\hfill\ensuremath{\square}}%
\newcommand{\NE}{\textsc{ne}\xspace}
\newcommand{\PoA}{\text{PoA}\xspace}
\newcommand{\remove}[1]{}
\title{On the Price of Anarchy for High-Price Links}
\author{C. \`Alvarez \and A. Messegu\'e}
\institute{ALBCOM Research Group, Computer Science Department, UPC, Barcelona\\
\email{\{alvarez,amessegue\}@cs.upc.edu}}
\begin{document}

\maketitle

\begin{abstract}

We study Nash equilibria and the price of anarchy in the classic model of Network Creation Games introduced by Fabrikant,  Luthra, Maneva, Papadimitriou and Shenker in 2003. This is a selfish network creation model where players correspond to nodes in a network  and each of them can create links to the other $n-1$ players at a prefixed price $\alpha > 0$. The player's goal is to minimise the sum of her cost buying edges and her cost for using the resulting network. One of the main conjectures for this model states that the price of anarchy, i.e. the relative cost of the lack of coordination,  is constant for all  $\alpha$. This conjecture has been confirmed for $\alpha = O(n^{1-\delta})$ with $\delta \geq 1/\log n$ and for $\alpha > 4n-13$. The best known upper bound on the price of anarchy for the remaining range is $2^{O(\sqrt{\log n})}$. 

We give new insights into the structure of the Nash equilibria for $\alpha > n$ and we enlarge the range of the parameter $\alpha$ for which the price of anarchy is constant. Specifically, we prove that for any small  $\epsilon>0$,    the price of anarchy is constant  for $\alpha > n(1+\epsilon)$ by showing that any biconnected component of any non-trivial Nash equilibrium, if it exists, has at most a constant number of nodes. 
 \end{abstract}

\section{Introduction}Many distinct network creation models trying to capture properties of Internet-like networks or social networks have been extensively studied in Computer Science, Economics, and Social Sciences. In these models, the players (also called nodes or agents) buy some links to other players creating in this way a network formed by their choices. Each player has a cost function that captures the need of buying few links and, at the same time, being well-connected to all the remaining nodes of the resulting network. The aim of each player is  to minimise her cost following her selfish interests.
A stable configuration  in which every player or agent has no incentive in deviating unilaterally from her current strategy is called a \emph {Nash equilibrium} (\NE).
In order to evaluate the social impact of the resulting network, the \emph {social cost} is introduced. In this setting the social cost is defined as the sum of the individual costs of all the players. Since there is no coordination among the different players, one can expect that stable networks do not minimise the social cost. 
The \emph {price of anarchy} (\PoA) is a measure that quantifies how far  is the worst  \NE (in the sense of social cost) with respect to any optimal configuration that minimises the social cost. Specifically, the \PoA is  defined as the ratio between the maximum social cost of \NE and the social cost  of the optimal configuration.
If we were able to prove formally that the \PoA is constant, then we could conclude that the equilibrium configurations in the selfish network creation games are so good in terms of social cost.

Since the introduction of the classical network creation game by Fabrikant et al. in \cite{Fe:03},  many efforts have been done in order to analyze the quality of the resulting equilibrium networks.
The \emph{constant \PoA conjecture} is a well-known conjecture that states  that the \PoA is constant independently of the price of the links. 
In this work we provide a new understanding of the structure of the equilibrium networks for  the classical network creation game \cite{Fe:03}. We focus on the equilibria for high-price links and show that in the case that an equilibrium is not a tree, then the size of any of its biconnected components is upper bounded by a constant. This is the key ingredient to prove later that,  for any small 
$\epsilon>0$, the \PoA  is constant for  
$\alpha > n(1+\epsilon)$  where $\alpha$ is the price per link and $n$ is the number of nodes.

Let us first define formally  the model and  related concepts.


\vskip 10pt

 \subsection{Model and definitions} 
 The \emph{sum classic network creation game} $\Gamma$ is defined by  a pair $\Gamma= (V,\alpha)$ where $V = \left\{1,2,....,n \right\}$  denotes the set of players and  $\alpha>0$ a positive parameter. Each player $u\in V$ represents a node of an undirected graph and $\alpha$ represents the cost of establishing a link.
 
 A \emph{strategy} of a player $u$ of $\Gamma$  is a subset $s_u \subseteq V \setminus \left\{u \right\}$, the set of nodes for which player $u$ pays for establishing a link. A strategy profile for $\Gamma$ is a tuple $s=(s_1,\ldots,s_n)$ where $s_u$ is the strategy of player $u$, for each player $u\in V$. Let $\mathscr{S}$ be the set of all strategy profiles of $\Gamma$. Every strategy profile $s$ has associated a \emph{communication network} that is defined as the undirected graph $G[s] = (V, \left\{uv \mid v \in s_u \lor u \in s_v \right\})$. Notice that $uv$ denotes the undirected edge between $u$ and $v$. 
 
 
Let $d_G(u,v)$ be the distance in $G$ between $u$ and $v$. The cost associated to a player $u \in V$  in a strategy profile $s$ is defined by $c_u(s) =  \alpha |s_u| + D_{G[s]}(u)$ where   $D_G(u) = \sum_{v\in V, v \neq u} d_{G}(u,v)$ is the sum of the distances from the player $u$ to all the other players in $G$. As usual, the social cost of a strategy profile $s$ is defined by  $C(s)= \sum_{u \in V}{c_u(s)}$. 
 
 A Nash Equilibrium (\NE) is a strategy vector $s$ such that  for every player $u$ and every strategy vector $s'$ differing from $s$ in only the $u$ component, $s_u \neq s_u'$,  satisfies $c_u(s) \leq c_u(s')$. In a \NE $s$ no player has incentive to deviate individually her strategy since the cost difference $c_u(s')-c_u(s) \geq 0$. 
  Finally, let us denote by $\mathcal{E}$ the set of all \NE strategy profiles. The  price of anarchy (\PoA ) of $\Gamma$  is defined as $PoA= \max_{s \in \mathcal{E}} 
 C(s)/\min_{s \in \mathscr{S}}  C(s)$. 

It is worth observing that in  a \NE $s=(s_1,...,s_n)$ it never happens that $u\in s_v$ and $v\in s_u$, for any $u,v\in V$. Thus, if $s$ is a \NE, $s$ can be seen as an orientation of the edges of $G[s]$ where an arc from $u$ to $v$ is placed whenever $v \in s_u$. It is clear that a \NE $s$ induces a graph $G[s]$ that we call  \emph{NE graph} and we mostly omit the reference to such strategy profile $s$ when it is clear from context. However, notice that a graph $G$ can have different orientations. Hence, when we say that $G$ is a \NE graph we mean that $G$ is the outcome of a \NE strategy profile $s$, that is, $G=G[s]$. 

Given a graph $G$ we denote by $X \subseteq G$ the subgraph of $G$ induced by $V(X)$. In this way, given a graph  $G=G[s] = (V,E)$, a node $v \in V$,  and $X \subseteq G$, the \emph{outdegree of} $v$ in $X$ is defined as $deg_X^+(v)= | \left\{ u \in V(X) \mid u \in s_v\right\}| $, the \emph{indegree of} $v$ in $X$  as $deg_X^-(v) =  | \left\{ u \in V(X) \mid  v \in s_u \right\}|$, and, finally, the \emph{degree of} $v$ in $X$ as $deg_X(v) = deg_X^+(v)+deg_X^-(v)$. Notice that $deg_X(v) = |\left\{ u \in V(X) \mid  uv \in E \right\}|$.  Furthermore, the average degree of $X$ is defined as $deg(X)= \sum_{v \in V(X)}deg_X(v)/|V(X)|$.

Furthermore, remind that in  a connected graph $G=(V,E)$ a  vertex is a \emph{cut vertex} if its removal increases the number of connected components of $G$. A graph is biconnected if it has no cut vertices. We say that $H \subseteq G$ is a \emph{biconnected component} of $G$  if $H$ is a maximal biconnected subgraph of $G$. More specifically, $H$ is such that there is no other distinct biconnected subgraph of $G$ containing $H$ as a subgraph. Given a biconnected component $H$ of  $G$ and a node  $u \in V(H)$, we define $S(u)$ as the connected component containing $u$ in  the subgraph induced by the vertices $(V(G)\setminus V(H)) \cup \left\{u \right\}$. The \emph{weight} of a node $u \in V(H)$, denoted by $|S(u)|$ is then defined as the number of nodes of $S(u)$. Notice that $S(u)$ denotes the set of all nodes $v$ in the connected component containing $u$ induced by $(V(G) \setminus V(H)) \cup \{u\}$ and then, every shortest path in $G$ from $v$ to any node $w \in V(H)$ goes through $u$. 

In the following sections we consider $G$ to be a \NE  for $\alpha > n $  and $H \subseteq G$, if it exists, a non-trivial biconnected component of $G$, that is, a biconnected component of $G$ of at least three distinct nodes. Then we use the abbreviations $d_G,d_H$ to refer to the diameter of $G$ and the diameter of $H$, respectively, and $n_H$ the size of $H$.

\subsection{Historical overview}

We now describe the progress around the central question of giving improved upper bounds on the \PoA of the network creation games introduced by Fabrikant et al. in \cite{Fe:03}. 

First of all, let us explain briefly two key results that are used to obtain better upper bounds on the \PoA. The first is that the \PoA for trees is at most $5$ (\cite{Fe:03}). The second one is that the \PoA of any \NE graph is upper bounded by its diameter plus one unit (\cite{Demaineetal:07}). Using these two results it can be shown that the \PoA is constant for almost all values of the parameter $\alpha$. Demaine et al. in \cite{Demaineetal:07} showed constant \PoA for  $\alpha = O(n^{1-\delta})$ with $\delta \geq \frac{1}{\log n}$ by proving that the diameter of equilibria is constant for the same range of $\alpha$. In the view that the \PoA is constant for a such a wide range of values of $\alpha$, Demaine et al. in \cite{Demaineetal:07} conjectured that the \PoA is constant for any $\alpha$. This is what we call the \emph{constant \PoA conjecture}. More recently, Bil{\`{o}} and Lenzner in \cite{Lenznertree} demonstrated constant \PoA for $\alpha > 4n-13$ by showing that every \NE is a tree for the same range of $\alpha$. For the remaining range Demaine et al. in \cite{Demaineetal:07} determined that the \PoA is upper bounded by $2^{O(\sqrt{\log n})}$. 

The other important conjecture, the \emph{tree conjecture}, stated by Fabrikant et al. in \cite{Fe:03}, still remains to be solved. The first version of the tree conjecture said that there exists a positive constant $A$ such that every \NE is a tree for $\alpha > A$. This was later refuted by Albers et al. in \cite{Albersetal:06}. The reformulated tree conjecture that is believed to be true is for the range $\alpha > n$. In \cite{Mihalaktree} the authors show an example of a non-tree \NE for the range $\alpha = n-3$ and then, we can deduce that the generalisation of the tree conjecture for $\alpha > n$ cannot be extended to the range $\alpha > n(1-\delta)$ with $\delta >0$ any small enough positive constant. Notice that the constant \PoA conjecture and the tree conjecture are related in the sense that if the tree conjecture was true, then we would obtain that the \PoA is constant for the range $\alpha > n$ as well. 

Let us describe the progress around these two big conjectures considering first the case of large values of $\alpha$ and after the case of small values of $\alpha$. 

\vskip 5pt

For \emph{large values} of $\alpha$ it has been shown constant \PoA for the intervals $\alpha > n^{3/2}$ \cite{Lin}, $\alpha > 12n \log n$ \cite{Albersetal:06}, $\alpha > 273n$ \cite{Mihalakmostly}, $\alpha > 65n$ \cite{Mihalaktree}, $\alpha > 17n$ \cite{Alvarezetal} and $\alpha > 4n-13$ \cite{Lenznertree}, by proving that every \NE for each of these ranges is a tree, that is, proving that the tree conjecture holds for the corresponding range of $\alpha$. 

The main approach to prove the result in \cite{Mihalakmostly,Mihalaktree,Alvarezetal} is to consider a biconnected (or $2$-edge-connected in \cite{Alvarezetal})  component $H$ from the \NE network, and then to establish non-trivial upper and lower bounds for the average degree of $H$, noted as $deg(H)$. More specifically, it is shown that $deg(H) \leq f_1(n,\alpha)$ for every $\alpha \geq c_1 n$ and $deg(H) \geq f_2(n,\alpha)$ for every $\alpha \geq c_2 n$, with $c_1,c_2$ constants  and $f_1(n,\alpha),f_2(n,\alpha)$ functions of $n,\alpha$.  From this it can be concluded that there cannot exist any biconnected component $H$ for any $\alpha$ in the set $ \left\{ \alpha  \mid f_1(n,\alpha) < f_2(n,\alpha) \land \alpha \geq \max(c_1,c_2) n \right\} $, and thus every \NE is a tree for this range of $\alpha$.  

In \cite{Mihalakmostly,Mihalaktree}, to prove the upper bound on the term $deg(H)$ the authors basically consider a BFS tree $T$ rooted at a node $u$ minimising the sum of distances in $H$ and define a \emph{shopping vertex} as a vertex from $H$ that has bought at least one edge of $H$ but not of $T$. The authors show that every shopping vertex has bought at most one extra edge and that the distance between two distinct shopping vertices is lower bounded by a non-trivial quantity that depends on $\alpha$ and $n$. By combining these two properties the authors can give an improved upper bound on $deg(H)$ which is close to $2$ from above when $\alpha$ is large enough in comparison to $n$.  On the other hand, to prove a lower bound on $deg(H)$ the authors show that in $H$  there cannot exist too many nodes of degree $2$ close together. 

In \cite{Alvarezetal},  the authors use the same upper bound as the one in \cite{Mihalaktree} for the term $deg(H)$ but give an improved lower bound better than the one from \cite{Mihalaktree}. To show this lower bound we introduce the concept of \emph{coordinates} and \emph{2-paths}. For $\alpha > 4n$, the authors prove that every minimal cycle is directed and then use this result to show that there cannot exist long $2-$paths.  

In contrast, Bil{\`{o}} and Lenzner in \cite{Lenznertree} consider a different approach. Instead of using the technique of bounding the average degree, they introduce, for any non-trivial biconnected component $H$ of a graph $G$,  the concepts of \emph{critical pair}, \emph{strong critical pair}, and then, show that every minimal cycle for the corresponding range of $\alpha$ is directed. The authors play with these concepts in a clever way in order to reach the conclusion. 

\vskip 3pt

In a very preliminary draft \cite{Alvarezetal2}, we take another perspective and conclude that given $\epsilon > 0$ any positive constant, the \PoA is constant for $\alpha > n(1+\epsilon)$. Specifically, in \cite{Alvarezetal2}, we prove that if the diameter of a \NE graph is larger than a given positive constant, then the graph must be a tree. Such proposal  represents an interesting approach to the same problem but the calculations and the proofs are very involved and hard to follow. In this work we present in a clear and elegant way  the stronger result that, for the same range of $\alpha$, the size of any biconnected component of any non-tree \NE is upper bounded by a constant. 
\vskip 5pt

 For \emph{small values} of $\alpha$, among the most relevant results, it has been proven that the \PoA is  constant for the intervals $\alpha = O(1)$ \cite{Fe:03}, $\alpha = O(\sqrt{n})$ \cite{Albersetal:06,Lin} and $\alpha = O(n^{1-\delta})$ with $\delta \geq 1/\log n$ \cite{Demaineetal:07}. 

The most powerful technique used in these papers is the one from Demaine et al. in \cite{Demaineetal:07}. They show that the \PoA is constant for $\alpha =O(n^{1-\delta})$ with $\delta > 1/\log n$, by studying a specific setting where some disjoint balls of fixed radius are included inside a ball of bigger radius. Considering the deviation that consists in buying the links to the centers of the smaller balls, the player performing such deviation gets closer to a majority of the nodes by using these extra bought edges (if these balls are chosen adequately). With this approach it can be shown that the size of the balls grows in a very specific way, from which then it can be derived the upper bounds for the diameter of equilibria and thus for the \PoA. 

\subsection{Our contribution}

Let us consider a weaker version of the tree conjecture that considers the existence of  biconnected components in a \NE having some specific properties regarding  their size. 

\begin{conjecture}[The biconnected component conjecture]
For $\alpha > n$, any biconnected component of a non-tree \NE graph has size at most a prefixed constant.
\end{conjecture}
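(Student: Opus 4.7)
The plan is to follow the average-degree methodology of \cite{Mihalakmostly,Mihalaktree,Alvarezetal} and show that, whenever $\alpha > n(1+\epsilon)$, any non-trivial biconnected component $H$ of a \NE graph $G$ must simultaneously satisfy an upper bound $deg(H) \leq 2 + g(\epsilon,n_H)$, with $g(\epsilon,n_H) \to 0$ as $n_H \to \infty$, and a lower bound $deg(H) \geq 2 + h(\epsilon)$, with $h(\epsilon) > 0$ independent of $n_H$. These two bounds are compatible only when $n_H$ is below a constant $C(\epsilon)$, which yields the conjecture (in the form proved by the paper).

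For the upper bound, I would fix a vertex $u^\star \in V(H)$ minimising $D_H$ and a BFS tree $T$ of $H$ rooted at $u^\star$. Following the shopping-vertex machinery of \cite{Mihalaktree}, every edge of $H$ that is not in $T$ is paid by some vertex $w$, and because removing it in isolation from $w$'s strategy must not be profitable it has to save at least $\alpha > n(1+\epsilon)$ units in $w$'s distance sum. Since such a saving is distributed over at most $n_H-1$ endpoints and each endpoint's individual gain is bounded by the diameter of $H$, one forces every pair of distinct shopping vertices to sit at graph distance growing with $\epsilon$ and $n_H$. This sparsity bounds the number of extra edges by $o(n_H)$, producing the required $g(\epsilon,n_H)$.

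For the lower bound, I would combine the coordinates and $2$-path arguments from \cite{Alvarezetal} with the observation that, for $\alpha > n$, every minimal cycle of $H$ must be directed. A long induced path of degree-$2$ vertices in $H$ would admit, at one of its interior nodes $v$, either a swap deviation (reorienting $v$'s unique outgoing edge so as to cut the path) or a buy-then-remove combination that strictly decreases $v$'s cost; here the $(1+\epsilon)$ slack in $\alpha$ is precisely what turns the non-strict inequalities of previous works into quantitative ones. Ruling out arbitrarily long degree-$2$ stretches in $H$ then yields $h(\epsilon) > 0$.

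The step I expect to be the main obstacle is the lower bound, because that is the only place where the slack $\epsilon$ enters non-trivially: prior proofs obtained $h(\epsilon) > 0$ only for $\alpha$ comfortably larger than $n$, e.g.\ $\alpha > 4n$ in \cite{Alvarezetal} or $\alpha > 4n-13$ in \cite{Lenznertree}. Covering the full range $\alpha > n(1+\epsilon)$ forces one to combine several deviation candidates simultaneously, track the weights $|S(u)|$ of nodes along $H$ carefully, and rule out cases where a candidate beneficial deviation is compensated by subtle cost gains elsewhere in the component; these bookkeeping difficulties are what pushed the earlier draft \cite{Alvarezetal2} into the diameter-based formulation, and the present plan should instead handle them locally around a long degree-$2$ run.
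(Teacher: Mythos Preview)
Your high-level template (upper and lower bounds on $deg(H)$ meeting only for bounded $n_H$) matches the paper, but you have inverted where the real work lies, and as written the plan does not close.

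\textbf{The lower bound is not the obstacle.} The bound $deg(H) \geq 2 + \tfrac{1}{16}$ from \cite{Mihalakmostly,Mihalaktree} holds for \emph{all} $\alpha$; it needs no $\epsilon$-slack, no directed-cycle argument, and no new work. The paper simply quotes it. Your entire third and fourth paragraphs are aimed at a difficulty that does not exist.

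\textbf{The upper bound is the obstacle, and the shopping-vertex argument does not give what you need.} The shopping-vertex machinery of \cite{Mihalaktree} yields $deg(H) \leq 2 + O\!\bigl(\tfrac{n}{\alpha-n}\bigr)$, a bound that is \emph{constant in $n_H$}: for $\alpha = n(1+\epsilon)$ it reads $deg(H) \leq 2 + O(1/\epsilon)$ and never approaches $2$ however large $n_H$ becomes. Your sketch asserts that the separation between shopping vertices ``grows with $\epsilon$ and $n_H$'', but the standard argument only produces separation of order $\alpha/n$; the claim that the saving ``is distributed over at most $n_H-1$ endpoints'' is incorrect, since deleting an edge of $H$ changes $D_G(w)$, a sum over all $n$ nodes, and the weights $|S(v)|$ hanging off $H$ can be arbitrary. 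So your $g(\epsilon,n_H)\to 0$ is unsupported.

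What the paper actually does for the upper bound is new: it introduces the $A$-sets $A_{e_1,e_2}(v)$ and analyses the deviation ``delete two bought edges $e_1,e_2\in E(H)$ and buy a single link to the $D_G$-minimiser $u$'', obtaining $|A_{e_1,e_2}(v)|\geq(\alpha-n)/(4d_H)$ and, by a disjointness count, $deg(H)\leq 2 + \tfrac{16 d_H(d_H+1)n}{n_H(\alpha-n)}$. This already has the crucial $1/n_H$ factor, but also a $d_H^2$ factor, so a second ingredient is needed: a Demaine-style ball-growing argument \emph{restricted to $H$} (Lemmas~\ref{lem:ball}--\ref{lem:reach}) giving $d_H = 2^{O(\sqrt{\log n_H})}$. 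Combining the two yields $deg(H) \leq 2 + \tfrac{16}{\epsilon}\cdot\tfrac{2^{O(\sqrt{\log n_H})}}{n_H}\to 2$, which against the off-the-shelf $2+\tfrac{1}{16}$ forces $n_H\leq K_\epsilon$. Neither the $A$-set deviation nor the $d_H$-vs-$n_H$ relation appears in your plan, and without them the argument cannot bound $n_H$.
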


 Let $\epsilon > 0$ be any positive constant. We show that the restricted version of this conjecture where $\alpha >n(1+\epsilon)$ is true  (Section \ref{sec:last}, Theorem \ref{thm:weakertreeconjecture}). This result jointly with $d_G \leq d_H+250$ (Theorem \ref{corol:diameter}, Section \ref{sec:diamH2}) for $\alpha > n$,  whenever $H$ exists, imply that $d_G$ is upper bounded by a prefixed constant, too. Recall that, the diameter of any graph plus one unit is an upper bound on the \PoA and the price of anarchy for trees is constant. Hence, we can conclude that the \PoA is constant for $\alpha > n(1+\epsilon)$. 


In order to show these results, we introduce a new kind of sets,  the $A$ sets, satisfying some interesting properties and we adapt some well-known techniques and then, combine them together in a very original way. Let us describe the main ideas of our approach:

\begin{itemize}

\item  Inspired by the technique considered in \cite{Demaineetal:07} which is used to relate the diameter of $G$ with the size of $G$, we obtain an analogous relation between the diameter of $H$ and the size of $H$ (Section \ref{sec:diamH1}, Proposition \ref{thm:dhnh2}), that can be expressed as $d_H=2^{O(\sqrt{\log n_H})}$. 

\item We improve the best upper bound known on $deg(H)$ (Section \ref{sec:last}, Theorem \ref{thm:degH}). We show this crucial result by using a different approach than the one used in the literature. We consider a node $u \in V(H)$ minimising the sum of distances and, instead of lower bounding the distance between two shopping vertices,  we introduce and study a natural kind of subsets, the $A$ sets (Section \ref{sec:1}). Each $A$ set corresponds to a node $v \in V(H)$ and a pair of edges $e_1,e_2$ where $v \in V(H)$ and $e_1,e_2 \in E(H)$ are two links bought by $v$. The $A$ sets play an important role when upper bounding the cost difference of player $v$ associated to the deviation of the same player that consists in selling $e_1,e_2$ and buying a link to $u$ (Section \ref{sec:1}, Proposition \ref{prop:formula1} and Proposition \ref{prop:formula2}). By counting the cardinality of these $A$ sets we show that the term $deg(H)$ can be upper bounded by an expression in which the terms $n,\alpha,n_H,$ and $d_H$  appear (Section \ref{sec:1}, Proposition \ref{thm:1}). By using the  relation $d_H=2^{O(\sqrt{\log n_H})}$ we can refine the upper bound for the $deg(H)$ even more. Subsequently, we consider the technique used in \cite{Mihalakmostly,Mihalaktree,Alvarezetal}, in which lower and upper bounds on the average degree of $H$ are combined to reach a contradiction whenever $H$ exists, i.e. whenever $G$ is a non-tree \NE graph.


\end{itemize}

\section{An upper bound for $deg(H)$ in terms of the size and the diameter of $H$}
\label{sec:1}

Remind that in all the sections we consider that $G$ is a \NE of a network creation game $\Gamma = (V, \alpha)$ where $\alpha > n$. If $G$ is not a tree then we denote by $H$ a maximal biconnected component of $G$. 

In this section we give an intermediate upper bound for the term $deg(H)$ that will be useful later to derive the main conclusion of this paper.

Let $u \in V(H)$ be a prefixed node and suppose that we are given $v \in V(H)$ and $e_1=(v,v_1),e_2=(v,v_2)$ two links bought by $v$. The \emph{$A$ set of $v,e_1=(v,v_1),e_2=(v,v_2)$}, noted as $A_{e_1,e_2}(v)$, is the subset of nodes $z \in V(G)$ such that every shortest path (in $G$) starting from $z$ and reaching $u$ goes through $v$ and the predecessor of $v$ in any such path is either $v_1$ or $v_2$.

Therefore, notice that $v \not \in A_{e_1,e_2}(v)$ and the following remark always hold: 

\begin{remark}
\label{rem:1}
Let $e_1,e_2,e_1',e_2'$ be four distinct edges such that $e_1,e_2$ are bought by $v$ and $e_1',e_2'$ are bought by $v'$. If $d_G(u,v)=d_G(u,v')$ then the $A$ set of $v,e_1,e_2$ and the $A$ set of $v',e_1',e_2'$ are disjoint even if $v=v'$.
\end{remark}

Notice that the definition of the $A$ sets depends on $u \in V(H)$, a prefixed node. For the sake of simplicity we do not include $u$ in the notation of the $A$ sets. Proposition 1 and Proposition 2 are stated for any general $u\in V(H)$ but in Corollary 1 we impose that $u$ minimises the function $D_G(\cdot)$ in $H$.

For any $i =1,2$, we define the \emph{$A^i$ set of $v,e_1=(v,v_1),e_2=(v,v_2)$}, noted as $A^i_{e_1,e_2}(v)$, the subset of nodes $z$ from $A_{e_1,e_2}(v)$ for which there exists a shortest path (in $G$) starting from $z$ and reaching $u$ such that goes through $v$ and the predecessor of $v$ in such path is $v_i$. 

With these definitions, $A_{e_1,e_2}(v) = A^1_{e_1,e_2}(v) \cup A^2_{e_1,e_2}(v)$ and $A^i_{e_1,e_2}(v)= \emptyset$ iff $d_G(u,v_i) = d_G(u,v)-1$ or $d_G(u,v_i)=d_G(u,v)$.  Furthermore, the subgraph induced by $A^i_{e_1,e_2}(v)$ is connected whenever $A^i_{e_1,e_2}(v) \neq \emptyset$.

Now, suppose that $e_1,e_2 \in E(H)$ and think about the deviation of $v$ that consists in deleting $e_i$ for $i=1,2$ and buying a link to $u$. Let $\Delta C$ be the corresponding cost difference and define $crossings(X,Y)$ for subsets of nodes $X,Y \subseteq V(G)$ to be the set of edges $xy$ with $x\in X$, $y \in Y$. Then we derive formulae to upper bound $\Delta C$ in the two only possible complementary cases: (i) $crossings(A^1_{e_1,e_2}(v),A^2_{e_1,e_2}(v)) \neq \emptyset$ and (ii) $crossings(A^1_{e_1,e_2}(v),A^2_{e_1,e_2}(v)) = \emptyset$. 

In case (i), $A^1_{e_1,e_2}(v),A^2_{e_1,e_2}(v) \neq \emptyset$ so that the subgraphs induced by $A^1_{e_1,e_2}(v),A^2_{e_1,e_2}(v)$ are both connected. This trivially implies that the graph induced by $A_{e_1,e_2}(v) = A^1_{e_1,e_2}(v) \cup A^2_{e_1,e_2}(v)$ is connected as well. Therefore, since $H$ is biconnected and $e_1,e_2 \in E(H)$ by hypothesis, there must exist at least one connection distinct from $e_1,e_2$ joining $A_{e_1,e_2}(v)$ with its complement. Taking this fact into the account we obtain the following result:

\begin{proposition}
\label{prop:formula1} Let us assume that  $crossings(A^1_{e_1,e_2}(v),A^2_{e_1,e_2}(v)) \neq \emptyset$ and $xy$ is any connection distinct from $e_1,e_2$ between $A_{e_1,e_2}(v)$ and its complement, with $x \in A_{e_1,e_2}(v)$. Furthermore, let $l$ be the distance between $v_1,v_2$ in the subgraph induced by $A_{e_1,e_2}(v)$. Then $\Delta C$, the cost difference for player $v$ associated to the deviation of the same player that consists in deleting $e_1,e_2$ and buying a link to $u$, satisfies the following inequality:

$$ \Delta C \leq -\alpha + n+D_G(u)-D_G(v)+(2d_G(v,x)+l)|A_{e_1,e_2}(v)|$$ 
\end{proposition}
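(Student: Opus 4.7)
The plan is to compute $\Delta C$ directly by analysing the graph $G' := G - e_1 - e_2 + uv$ obtained after the deviation. Its edge-cost component equals $-2\alpha + \alpha = -\alpha$, so writing $A := A_{e_1,e_2}(v)$ the task reduces to showing $D_{G'}(v) \leq n + D_G(u) + (2\,d_G(v,x) + l)\,|A|$. The strategy has two main ingredients: first, exploit the newly-bought edge $uv$ to replace distances from $v$ in $G'$ by distances from $u$; second, exploit the hypothesis edge $xy$ together with the connectedness of $G[A]$ to reroute, inside $G'$, every $u$-to-$A$ path broken by the removal of $e_1,e_2$.

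For the first ingredient, the triangle inequality $d_{G'}(v,w) \leq 1 + d_{G'}(u,w)$ yields $D_{G'}(v) \leq n - 2 + D_{G'}(u)$. I would then split $D_{G'}(u)$ according to membership in $A$: if $w \notin A \cup \{u\}$, the very definition of $A$ guarantees a shortest $u$-to-$w$ path in $G$ that either avoids $v$ or crosses $v$ via a predecessor distinct from $v_1,v_2$; either way it avoids $e_1$ and $e_2$, so $d_{G'}(u,w) = d_G(u,w)$. Hence
\[
D_{G'}(u) \;=\; D_G(u) \,+\, \sum_{w \in A}\bigl(d_{G'}(u,w) - d_G(u,w)\bigr),
\]
and it remains to upper bound the last sum by roughly $(2\,d_G(v,x)+l)\,|A|$.

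For the second ingredient, fix $w \in A$ and build an explicit $u$-to-$w$ walk in $G'$ by concatenating a shortest $u$-to-$y$ path in $G$ (valid in $G'$ because $y \notin A$), the edge $xy$, and a path from $x$ to $w$ inside $G[A]$. The last piece exists because case (i) forces $G[A^1]$ and $G[A^2]$ to be connected and joined by a crossing edge, so $G[A]$ is connected; moreover every edge of $G[A]$ survives in $G'$ since $e_1,e_2$ are incident to $v \notin A$. This produces $d_{G'}(u,w) \leq d_G(u,y) + 1 + d_{G[A]}(x,w)$. Combining $d_G(u,y) \leq d_G(u,x) + 1$ with the identity $d_G(u,x) = d_G(u,v) + d_G(v,x)$ (valid because $x \in A$) and the analogous identity for $w$ gives
\[
d_{G'}(u,w) - d_G(u,w) \;\leq\; d_G(v,x) + 2 + d_{G[A]}(x,w) - d_G(v,w).
\]

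The heart of the argument, and the step I expect to be the main obstacle, is bounding $d_{G[A]}(x,w)$ inside $G[A]$ rather than in $G$. The plan is to route through the $v_i$'s: for $w \in A^i$, the prefix of a shortest $w$-to-$u$ path witnessing $w \in A^i$ is a $w$-to-$v_i$ path of length $d_G(v,w)-1$ that should lie in $G[A^i]$ thanks to the stated connectedness of $A^i$; an analogous segment exists from $x$ to some $v_j$ in $G[A^j]$. A triangle inequality inside $G[A]$, crossing between the two sides along a shortest $v_1$-$v_2$ path of length $l$ when $i \neq j$, then yields $d_{G[A]}(x,w) \leq d_G(v,x) + l + d_G(v,w) - 2$. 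Substituting back gives $d_{G'}(u,w) - d_G(u,w) \leq 2\,d_G(v,x) + l$, and summing over $w \in A$ together with the slack from the first step yields the stated bound.
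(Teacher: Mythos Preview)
Your proposal is correct and follows essentially the same approach as the paper. Both arguments construct the identical detour path $v \to u \to \cdots \to y \to x \to \cdots \to v_j \to \cdots \to v_i \to \cdots \to w$ for $w \in A$; the only difference is bookkeeping: the paper bounds $d_{G'}(v,z)$ directly for each $z$, while you first pass to $D_{G'}(u)$ via the edge $vu$ and then bound $d_{G'}(u,w)$, which incidentally saves two units (yielding $n-2$ in place of $n$). Two small remarks: your displayed identity for $D_{G'}(u)$ should be an inequality ``$\leq$'' (adding the edge $uv$ can only shorten distances from $u$), and the reason the $w$-to-$v_i$ prefix lies in $A^i$ is not the connectedness of $A^i$ per se but the definition of $A$ --- any intermediate node $z'$ on that prefix inherits a shortest $z'$-to-$u$ path through $v$ with predecessor $v_i$, and if $z'$ had an alternative shortest path violating the $A$-condition then so would $w$.
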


\begin{proof}

The term $-\alpha$ is clear because we are deleting the two edges $e_1,e_2$ and buying a link to $u$. Now let us analyse the difference of the sum of distances in the deviated graph $G'$ vs the original graph. For this purpose, suppose wlog that $x \in A^1_{e_1,e_2}(v)$ and let $z$ be any node from $G$. We distinguish two cases:

(A) If $z \not \in A_{e_1,e_2}(v)$ then:

(1) Starting at $v$, follow the connection  $vu$.

(2) Follow a shortest path from $u$ to $z$ in the original graph.

In this case we have that: 

$$d_{G'}(v,z) \leq 1+d_G(u,z)$$

(B) If $z \in A_{e_1,e_2}(v)$ then there exists some $i$ such that $z \in A^i_{e_1,e_2}(v)$. Consider the following path (see the figure below for clarifications):

\begin{figure}
\begin{center}
\includegraphics[scale=0.5]{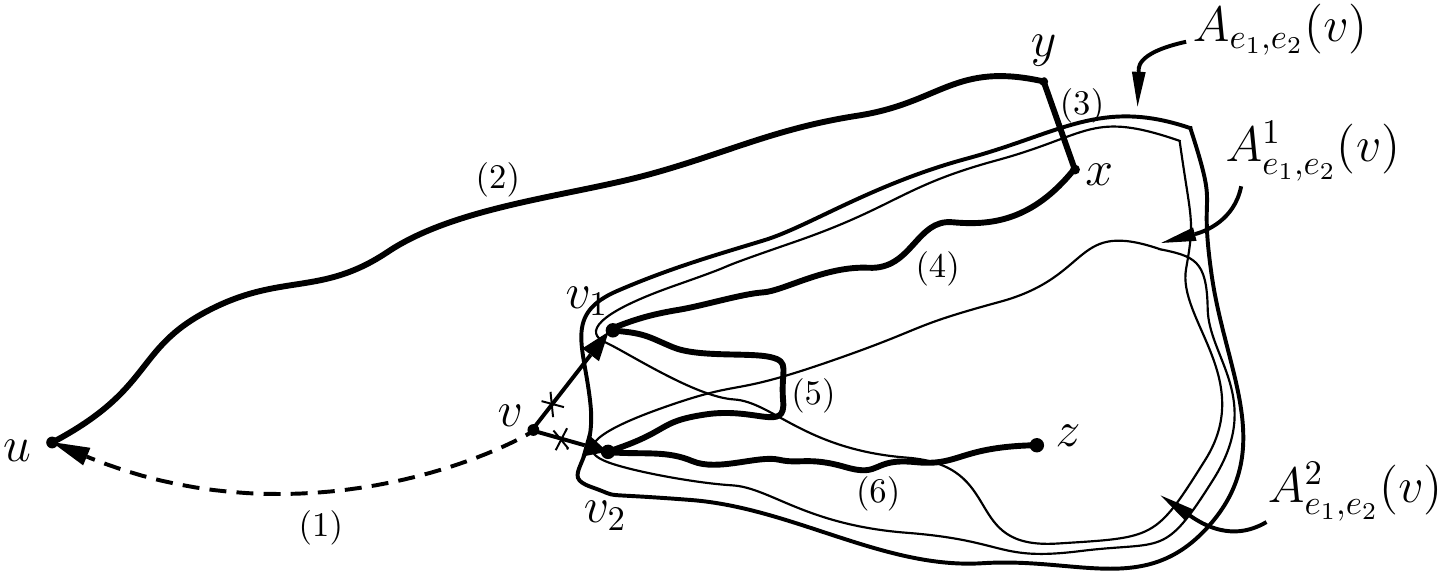}
\end{center}
\caption{The new path from $z$ to $v$ in the deviated graph $G'$}
\end{figure}
 (1) Starting at $v$, follow the connection $vu$, which corresponds to one unit distance.

 (2) Follow a path from $u$ to $y$ contained in the complementary of $A_{e_1,e_2}(v)$. Since $y \not \in A_{e_1,e_2}(v)$ we have that $d_G(u,y)\leq d_G(u,v)+d_G(v,x)+1$. Therefore, in this case we count at most $d_G(u,v)+d_G(v,x)+1$ unit distances. 
		
(3) Cross the connection  $yx$, which corresponds to one unit distance.

(4) Go from $x$ to $v_1$ inside $A_{e_1,e_2}(v)$ giving exactly $d_G(x,v)-1$ unit distances.

(5) Go from $v_1$ to $v_i$ inside $A_{e_1,e_2}(v)$ giving at most $l$ unit distances. 

(6) Go from $v_i$ to $z$ inside $A_{e_1,e_2}(v)$ giving exactly $d_G(v,z)-1$ unit distances.

In this case we have that:
\begin{align*}
d_{G'}(v,z) \leq & \overbrace{1}^{(1)} + \overbrace{d_G(u,v)+d_G(v,x)+1}^{(2)}+\overbrace{1}^{(3)}+\overbrace{d_G(x,v)-1}^{(4)}+\overbrace{l}^{(5)}+\overbrace{d_G(v,z)-1}^{(6)}  \\
 =& 1+d_G(u,z)+(2d_G(v,x)+l) 
\end{align*}
Combining the two inequalities we reach the conclusion: 
$$\Delta C \leq -\alpha + \sum_{z \in V(G)}( d_{G'}(v,z)-d_G(v,z)) \leq -\alpha +n+D_G(u)-D_G(v)+(2d_G(v,x)+l)|A_{e_1,e_2}(v)|$$
\end{proof} 
\QEDB

In case (ii), we assume that $crossings(A^1_{e_1,e_2}(v),A^2_{e_1,e_2}(v))=\emptyset$. Since $H$ is biconnected and $e_1,e_2 \in E(H)$ by hypothesis, for each $i$ such that $A^i_{e_1,e_2}(v) \neq \emptyset$ there must exist at least one connection distinct from $e_i$ joining $A_{e_1,e_2}^i(v)$ with its complement. Taking this fact into the account we obtain the following result:

 \begin{proposition}
 \label{prop:formula2} Let us assume that $crossings(A^1_{e_1,e_2}(v),A^2_{e_1,e_2}(v)) = \emptyset$ and let $I \subseteq \left\{1,2\right\}$ be the subset of indices $i$ for which $A^i_{e_1,e_2}(v) \neq \emptyset$. Furthermore, suppose that for each $i\in I$, $x_iy_i$ is any connection distinct from $e_i$ between $A^i_{e_1,e_2}(v)$ and its complement, with $x_i \in A^i_{e_1,e_2}(v)$. Then $\Delta C$, the cost difference of player $v$ associated to the deviation of the same player that consists in deleting $e_1,e_2$ and buying a link to $u$, satisfies the following inequality:

$$ \Delta C \leq -\alpha + n+D_G(u)-D_G(v)+\max(0,2 \max_{i \in I} d_G(v,x_i))|A_{e_1,e_2}(v)|$$ 

\end{proposition}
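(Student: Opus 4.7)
The plan is to mirror the proof of Proposition~\ref{prop:formula1}, constructing, for every $z \in V(G)$, an explicit walk in the deviated graph $G'$ (obtained from $G$ by removing $e_1,e_2$ and adding the new edge $vu$) from $v$ to $z$, and then using
$$\Delta C = -\alpha + \sum_{z \in V(G)}\bigl(d_{G'}(v,z) - d_G(v,z)\bigr),$$
where the $-\alpha$ accounts for selling two edges and buying one. The only modification with respect to Proposition~\ref{prop:formula1} is that, since there is no edge between $A^1_{e_1,e_2}(v)$ and $A^2_{e_1,e_2}(v)$, each nonempty $A^i_{e_1,e_2}(v)$ must now be re-entered through its own exit edge $x_i y_i$; in particular, the intermediate step of the walk in Proposition~\ref{prop:formula1} that traverses $A_{e_1,e_2}(v)$ from $v_1$ to $v_i$ (contributing the factor $l$) simply disappears here.

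I would distinguish the two cases exactly as in Proposition~\ref{prop:formula1}. If $z \notin A_{e_1,e_2}(v)$, the walk $v \to u \to \cdots \to z$ gives $d_{G'}(v,z) \leq 1 + d_G(u,z)$. If $z \in A^i_{e_1,e_2}(v)$ for some $i \in I$, the walk consists of: the new edge $vu$ (one step), a path from $u$ to $y_i$ contained in the complement of $A_{e_1,e_2}(v)$ of length at most $d_G(u,v) + d_G(v,x_i) + 1$, the edge $y_i x_i$ (one step), a shortest path $x_i \to v_i$ of length $d_G(v,x_i) - 1$ inside the connected subgraph induced by $A^i_{e_1,e_2}(v)$, and a shortest path $v_i \to z$ of length $d_G(v,z) - 1$ also inside $A^i_{e_1,e_2}(v)$. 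Using $d_G(u,z) = d_G(u,v) + d_G(v,z)$ for $z \in A_{e_1,e_2}(v)$, this yields
$$d_{G'}(v,z) \leq 1 + d_G(u,z) + 2\, d_G(v,x_i).$$

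Summing the two bounds over all $z \in V(G)$, and bounding $d_G(v,x_i) \leq \max_{i \in I} d_G(v,x_i)$ whenever $z \in A^i_{e_1,e_2}(v)$, one obtains
$$\sum_{z \in V(G)}\bigl(d_{G'}(v,z) - d_G(v,z)\bigr) \leq n + D_G(u) - D_G(v) + 2\max_{i \in I} d_G(v,x_i) \cdot |A_{e_1,e_2}(v)|.$$
The outer $\max(0,\cdot)$ in the statement simply handles the degenerate case $I = \emptyset$, when $A_{e_1,e_2}(v) = \emptyset$ and only Case A contributes; the stated inequality on $\Delta C$ then follows immediately.

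The main technical obstacle, inherited from Proposition~\ref{prop:formula1}, is justifying that the segment from $u$ to $y_i$ of length at most $d_G(u,v)+d_G(v,x_i)+1$ can actually be realised inside the complement of $A_{e_1,e_2}(v)$, and hence avoids the removed edges $e_1,e_2$. This is precisely where the hypothesis $crossings(A^1_{e_1,e_2}(v), A^2_{e_1,e_2}(v)) = \emptyset$ is used: together with the fact that $x_i y_i$ is, by choice, an edge distinct from $e_i$ leaving $A^i_{e_1,e_2}(v)$, it guarantees that $y_i \in V(G)\setminus A_{e_1,e_2}(v)$ (rather than lying in the other component), so a legitimate detour around $A_{e_1,e_2}(v)$ is available in $G'$.
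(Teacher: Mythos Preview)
Your proof is correct and follows essentially the same approach as the paper's: the same case split on $z \notin A_{e_1,e_2}(v)$ versus $z \in A^i_{e_1,e_2}(v)$, the same five-step walk in the second case, and the same arithmetic leading to $d_{G'}(v,z) \leq 1 + d_G(u,z) + 2d_G(v,x_i)$. Your added remark explaining why the no-crossings hypothesis forces $y_i \notin A_{e_1,e_2}(v)$ (so that the detour to $y_i$ genuinely avoids $A_{e_1,e_2}(v)$ and hence the deleted edges) is a clarification the paper leaves implicit.
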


\begin{proof}

The term $-\alpha$ is clear because we are deleting $e_1,e_2$ and buying a link to $u$. Now let us analyse the difference of the sum of distances in the deviated graph $G'$ vs the original graph. To this purpose, let $z$ be any node from $G$. We distinguish two cases:

(A) If $z \not \in A_{e_1,e_2}(v)$ then:

 (1) Starting at $v$, follow the connection  $vu$.

(2) Follow a shortest path from $u$ to $z$ in the original graph. 

In this case we have that: 

$$d_{G'}(v,z) \leq 1 + d_G(u,z) $$

(B) If $z \in A_{e_1,e_2}(v)$ then there exists some $i$ such that $z \in A^i_{e_1,e_2}(v)$. Consider the following path:

 (1) Starting at $v$, follow the connection $vu$, which corresponds to one unit distance.

 (2) Follow a path from $u$ to $y$ contained in the complementary of $A_{e_1,e_2}(v)$. Since $ y\not \in A_{e_1,e_2}(v)$ we have that $d_G(u,y) \leq d_G(u,v)+d_G(v,x_i)+1$. Therefore, in this case we count at most $d_G(u,v)+d_G(v,x_i)+1$ unit distances. 
 		
 (3) Cross the connection $y_ix_i$, which corresponds to one unit distance. 

 (4) Go from $x_i$ to $v_i$ giving exactly  $d_G(x_i,v)-1$ unit distances.

 (5) Go from $v_i$ to $z$ giving exactly $d_G(v,z)-1$ unit distances.

In this case we have that:
\begin{align*}
d_{G'}(v,z) \leq & \overbrace{1}^{(1)} + \overbrace{d_G(u,v)+d_G(v,x)+1}^{(2)}+\overbrace{1}^{(3)}+\overbrace{d_G(x_i,v)-1}^{(4)}+\overbrace{d_G(v,z)-1}^{(5)}  \\
 =& 1+d_G(u,z)+2d_G(v,x_i) 
\end{align*}
Combining the two inequalities we reach the conclusion: 
$$\Delta C \leq -\alpha + \sum_{z \in V(G)}( d_{G'}(v,z)-d_G(v,z)) \leq -\alpha + n+D_G(u)-D_G(v)+\max(0,2 \max_{i \in I} d_G(v,x_i))|A_{e_1,e_2}(v)|$$

\end{proof} 
\QEDB

 Now, notice the following simple fact: 

\begin{remark}
\label{rem:distH}
If $z_1,z_2 \in V(H)$ then any shortest path from $z_1$ to $z_2$ is contained in $H$. This is because otherwise, using the definition of cut vertex, any such path would visit two times the same cut vertex thus contradicting the definition of shortest path. Therefore, if $z_1,z_2\in V(H)$ then $d_G(z_1,z_2) = d_H(z_1,z_2 ) \leq d_H$. 
\end{remark}

Combining the formulae from Proposition \ref{prop:formula1} and Proposition \ref{prop:formula2} together with this last remark, we can obtain a lower bound for the cardinality of any $A$ set of $v,e_1,e_2$ when $u$ satisfies a very specific constraint:

\begin{corollary}
 \label{corol:formula}  If $u \in V(H)$ is such that $D_G(u) = \min_{z \in V(H)} \left\{D_G(z) \right\}$, then $|A_{e_1,e_2}(v)| \geq \frac{\alpha -n}{4d_H}$

\end{corollary}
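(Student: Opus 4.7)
The plan is to use the Nash equilibrium condition together with Propositions \ref{prop:formula1} and \ref{prop:formula2}. Since $G$ is a \NE, the deviation considered (player $v$ sells $e_1,e_2$ and buys the link $vu$) cannot strictly decrease her cost, so $\Delta C \geq 0$. I would split into the two exhaustive cases $crossings(A^1_{e_1,e_2}(v),A^2_{e_1,e_2}(v)) \neq \emptyset$ and $crossings(A^1_{e_1,e_2}(v),A^2_{e_1,e_2}(v)) = \emptyset$, apply the corresponding proposition, and then bound all the quantities appearing in the right-hand side in terms of $d_H$.

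First I would invoke the hypothesis on $u$: since $u$ minimises $D_G(\cdot)$ over $V(H)$ and $v\in V(H)$, we get $D_G(u)-D_G(v)\le 0$, so this term can be dropped from the upper bounds in both Proposition \ref{prop:formula1} and Proposition \ref{prop:formula2}. Next, because $e_1,e_2\in E(H)$ we have $v,v_1,v_2\in V(H)$, and the biconnectivity of $H$ guarantees that the connecting edges $xy$ (in case (i)) and $x_iy_i$ (in case (ii)) joining the $A$-sets with their complements can be selected inside $E(H)$, so that $x,x_i\in V(H)$. Then Remark \ref{rem:distH} gives $d_G(v,x)=d_H(v,x)\le d_H$ and $d_G(v,x_i)=d_H(v,x_i)\le d_H$.

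For the remaining parameter $l$ in case (i) I would bound $l\le 2d_H$ by routing a path from $v_1$ to $v_2$ inside the $A$-induced subgraph through a crossing edge between $A^1_{e_1,e_2}(v)$ and $A^2_{e_1,e_2}(v)$: each of the two halves lies in $H$ because the $A^i$'s together with $v_1,v_2$ are reached via shortest paths in $G$ that (by Remark \ref{rem:distH}) live in $H$, and each half has length at most $d_H$. Plugging these into the two propositions, the coefficient of $|A_{e_1,e_2}(v)|$ becomes at most $2d_G(v,x)+l\le 4d_H$ in case (i) and $2\max_{i\in I}d_G(v,x_i)\le 2d_H\le 4d_H$ in case (ii).

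Finally, combining $\Delta C \ge 0$ with the simplified upper bounds yields $\alpha - n \leq 4 d_H\,|A_{e_1,e_2}(v)|$ in both cases, from which the inequality $|A_{e_1,e_2}(v)| \geq \tfrac{\alpha-n}{4d_H}$ follows. The main subtlety I expect is the justification that the chosen crossing edges lie in $H$ and that $l\le 2d_H$; both boil down to the biconnectivity of $H$ together with the fact that shortest paths between nodes of $H$ stay in $H$ (Remark \ref{rem:distH}), and the rest of the argument is a direct rearrangement of the equilibrium inequalities.
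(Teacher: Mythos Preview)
Your proposal is correct and follows essentially the same approach as the paper: invoke the \NE condition $\Delta C\ge 0$, drop $D_G(u)-D_G(v)$ using the minimality of $u$, and bound the coefficient of $|A_{e_1,e_2}(v)|$ by $4d_H$ via Remark~\ref{rem:distH}. The only minor difference is that the paper proves the stronger fact that \emph{every} crossing edge between $A_{e_1,e_2}(v)$ (and likewise each $A^i_{e_1,e_2}(v)$) and its complement lies in $E(H)$, using that each $S(z)$ with $z\in V(H)$ is wholly inside or wholly outside the $A$ set and that edges between distinct $S(z)$'s must be $H$-edges; you instead argue directly from biconnectivity that a suitable $H$-crossing can be \emph{selected}, which is weaker but sufficient since Propositions~\ref{prop:formula1} and~\ref{prop:formula2} allow any crossing.
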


\begin{proof}
Let us analyse the properties that are fulfilled for the distinct elements in this setting:  

First, $u$ minimises the sum of distances on $V(H)$. Therefore, $D_G(u)-D_G(v)\leq 0$ for any $v\in V(H)$.

Now, let $xy$ be any crossing between $A_{e_1,e_2}(v)$ and its complement with $x$ in $A_{e_1,e_2}(v)$ and $y$ in the complementary of $A_{e_1,e_2}(v)$. Consider also $x',y'$ be the nodes from $V(H)$ such that $x\in S(x')$ and $y\in S(y')$. If $z\in V(H)$, then either $S(z)$ is a subset of $A_{e_1,e_2}(v)$, if $z \in A_{e_1,e_2}(v)$, or $S(z)$ is a subset of the complementary of $A_{e_1,e_2}(v)$ otherwise, by the definition of the $A$ sets and by the definition of cut vertex. Therefore, $S(x')$ is a subset of  $A_{e_1,e_2}(v)$ and $S(y')$ is a subset of the complementary of  $A_{e_1,e_2}(v)$. Furthermore, by the definition of biconnected component, any crossing or connection between $S(z_1)$ and $S(z_2)$ with $z_1,z_2 \in V(H)$ and $z_1 \neq z_2$, if it exists, must definitely be $z_1z_2$. Therefore, $x=x',y=y'$ and as a result $x,y\in V(H)$. Then by Remark \ref{rem:distH}, the distance from $x$ to $v$ is at most $d_H$. In a similar way, it can be deduced that if $x_iy_i$  is any crossing between $A_{e_1,e_2}^i(v)$ and its complement, then $x_i,y_i \in V(H)$ and therefore, the distance from $x_i$ to $v$ is at most $d_H$. As a conclusion, both expressions $d_G(v,x)$ and $d_G(v,x_i)$, appearing in the formulae from Proposition \ref{prop:formula1} and Proposition \ref{prop:formula2}, respectively, are at most $d_H$.

Moreover, whenever $e_1,e_2 \in E(H)$ and $crossings(A^1_{e_1,e_2}(v),A^2_{e_1,e_2}(v)) \neq \emptyset$, any shortest path connecting $v_1$ and $v_2$ inside $A_{e_1,e_2}(v)$ is contained in $H$ and has length at most $2d_H$. This implies that the expression $l$ appearing in the formula from Proposition \ref{prop:formula1} is at most $2d_H$.

With all these results, we deduce that, the expressions multiplying $|A_{e_1,e_2}(v)|$ in the rightmost term of the two inequalities from Proposition \ref{prop:formula1} ($2d_G(v,x)+l$) and Proposition \ref{prop:formula2} ($\max(0,2 \max_{i \in I}d_G(v,x_i))$)  can be upper bounded by $4d_H$. 

Imposing that $G$ is a \NE then we obtain the conclusion.
\end{proof}
\QEDB

Now we use this last formula to give an upper bound for the average degree of $H$. Recall that we are working in the range $\alpha > n$:

\begin{proposition}
\label{thm:1} 
$$deg(H) \leq 2+\frac{16d_H(d_H+1)n}{n_H(\alpha-n)}$$
\end{proposition}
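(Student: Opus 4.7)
The plan is a double-counting argument built on Corollary \ref{corol:formula}. Since every edge of $H$ has both endpoints in $V(H)$ and, in any \NE, is bought by exactly one of them, for each $v \in V(H)$ I set $k_v$ to be the number of edges of $E(H)$ that $v$ buys; then $|E(H)| = \sum_{v \in V(H)} k_v$ and $deg(H) = 2|E(H)|/n_H$, so upper bounding $deg(H)$ reduces to upper bounding $\sum_v k_v$.

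First I would fix $u \in V(H)$ realising $D_G(u) = \min_{z \in V(H)} D_G(z)$, the vertex for which Corollary \ref{corol:formula} is stated. By Remark \ref{rem:distH}, every $v \in V(H)$ satisfies $d_G(u,v) \leq d_H$, so the distance levels $L_i = \{v \in V(H) : d_G(u,v) = i\}$ for $i = 0, 1, \ldots, d_H$ partition $V(H)$ into $d_H + 1$ layers.

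The core step is a pairing argument. For every $v \in V(H)$, I would arbitrarily partition the $k_v$ edges of $E(H)$ that $v$ buys into $\lfloor k_v/2 \rfloor$ pairwise edge-disjoint unordered pairs. Each pair $\{e_1, e_2\}$ carries an $A$ set of size at least $(\alpha - n)/(4 d_H)$ by Corollary \ref{corol:formula}, and any two pairs drawn from the same level $L_i$ involve four distinct edges (either because they come from the same vertex and the pairing is edge-disjoint, or because they come from different vertices of $L_i$), so by Remark \ref{rem:1} their $A$ sets are disjoint subsets of $V(G)$. This yields
\[
\sum_{v \in L_i} \left\lfloor \tfrac{k_v}{2} \right\rfloor \cdot \frac{\alpha - n}{4 d_H} \;\leq\; n, \qquad \text{hence} \qquad \sum_{v \in L_i} \left\lfloor \tfrac{k_v}{2} \right\rfloor \;\leq\; \frac{4 d_H\, n}{\alpha - n}.
\]
Summing over the $d_H + 1$ levels and using the elementary inequality $\lfloor k/2 \rfloor \geq (k-1)/2$ for every integer $k \geq 0$,
\[
\frac{|E(H)| - n_H}{2} \;\leq\; \sum_{v \in V(H)} \left\lfloor \tfrac{k_v}{2} \right\rfloor \;\leq\; \frac{4 d_H(d_H + 1)\, n}{\alpha - n},
\]
which rearranges to $deg(H) \leq 2 + \frac{16 d_H(d_H + 1)\, n}{n_H(\alpha - n)}$.

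The main subtlety is respecting the ``four distinct edges'' hypothesis in Remark \ref{rem:1}: if one tried to use all $\binom{k_v}{2}$ unordered pairs at a vertex $v$, then nodes whose unique shortest-path predecessor of $v$ appears in several such pairs would be over-counted, so disjointness of the associated $A$ sets would fail. Restricting to an edge-disjoint pairing at each vertex is precisely what preserves the disjointness needed for the level-wise summation; the price is a factor $2$ loss that is absorbed into the constant $16$.
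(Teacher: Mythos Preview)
Your proposal is correct and follows essentially the same approach as the paper's own proof: both fix the minimiser $u$, partition the edges of $E(H)$ bought by each $v$ into $\lfloor deg_H^+(v)/2\rfloor$ edge-disjoint pairs, invoke Corollary~\ref{corol:formula} to lower bound each $|A_{e_1,e_2}(v)|$, and use Remark~\ref{rem:1} to sum disjointly within each distance level $0,\ldots,d_H$ before combining with $\lfloor k/2\rfloor \geq (k-1)/2$. Your explicit discussion of why the edge-disjoint pairing (rather than all $\binom{k_v}{2}$ pairs) is needed to preserve the four-distinct-edges hypothesis of Remark~\ref{rem:1} is a point the paper leaves implicit.
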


\begin{proof}
For any node $v \in V(H)$ let $Z(v)$ be any maximal set of distinct and mutually disjoint pairs of edges from $H$ bought by $v$. Let $X$ be defined as the set of tuples $( \left\{ e_1,e_2 \right\} ,v)$ with $v \in V(H)$ and $\left\{ e_1,e_2 \right\}$ a pair of edges from $Z(v)$. Now define $S = \sum_{(\left\{ e_1,e_2 \right\},v) \in X} |A_{e_1,e_2}(v)|$. On the one hand, using Corollary \ref{corol:formula}: 
$$S \geq \frac{\alpha -n }{4d_H}|X|$$

On the other hand, for each distance index $i$, let $S_i$ be the sum of the cardinalities of the $A$ sets for all the tuples $(\left\{ e_1,e_2 \right\},v)\in X$ with $d_G(u,v)=i$. By Remark \ref{rem:1}, $S_i \leq n$. Therefore: 

$$|X| \frac{\alpha - n}{4d_H} \leq S = S_0+...+S_{d_H} \leq n(d_H+1)$$

Next, notice that there are exactly $\lfloor \frac{deg^+_H(v)}{2}\rfloor$ pairs in $Z(v)$ for each $v$ considered. Furthermore, $\lfloor \frac{deg_H^+(v)}{2}\rfloor = deg^+_H(v)/2$ if $deg_H^+(v)$ is even and $\lfloor \frac{deg_H^+(v)}{2}\rfloor =(deg^+_H(v)-1)/2$ otherwise. Hence:

$$|X| \geq \sum_{v \in V(H)}\frac{deg_H^+(v)-1}{2} = \frac{|E(H)|-|V(H)|}{2}$$

Finally: 

$$ deg(H) = \frac{2|E(H)|}{|V(H)|} \leq 2+ \frac{4 |X|}{|V(H)|} \leq 2+\frac{16(d_H+1)nd_H}{n_H(\alpha-n)}$$

\end{proof}
\QEDB




\section{The diameter of $H$ vs the number of nodes of $H$}
\label{sec:diamH1}

 In this section we establish a relationship between the diameter and the number of the vertices of $H$ which allows us to refine the upper bound for the term $deg(H)$ using the main result of the previous subsection. 

We start extending the technique introduced by Demaine et al in \cite{Demaineetal:07}. Instead of reasoning in a general $G$, we focus our attention to the nodes from $H$ reaching an analogous result. Since for $\alpha > 4n-13$ every \NE is a tree it is enough if we study the case $\alpha < 4n$.

For any integer value $k$ and $u \in V(H)$ we define $N_{k,H}(u) = \left\{ v \in V(H) \mid d_G(u,v) \leq k \right\}$, the set of nodes from $V(H)$ at distance at most $k$ from $u$.  With this definition in mind then $S_k(u) = \cup_{v \in N_{k,H}(u)} S(v)$ is the set of all nodes inside $S(v)$ for all $v\in V(H)$ at distance at most $k$ from $u$. In other words, $S_k(u)$ is the set of all nodes $z$ such that the first cut vertex that one finds when following any shortest path from $z$ to $u$ is at distance at most $k$ from $u$.

Furthermore, for any integer $k$ we define $m_k = \min_{u \in V(H)} |N_{k,H}(u)|$. That is, $m_k$ is the minimum cardinality that any $k$-neighbourhood in $H$ can have. 

\begin{lemma}\label{lem:ball}
Let $H$ be a biconnected component of $G$. For any integer $k\geq 0$, either there exists a node $u \in V(H)$ such that $|S_{4k+1}(u)| > n/2$ or, otherwise,  $m_{5k+1} \geq m_k k/4$. 
\end{lemma}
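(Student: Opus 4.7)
The plan is to adapt the ball--doubling lemma of Demaine et al.~\cite{Demaineetal:07} from distance balls in $G$ to the $H$-relative neighborhoods $N_{k,H}$ and the extended sets $S_k$. I will assume the first alternative fails (so $|S_{4k+1}(u)|\leq n/2$ for every $u\in V(H)$) and derive $m_{5k+1}\geq m_kk/4$. Fix $u^*\in V(H)$ attaining the minimum, so $|N_{5k+1,H}(u^*)|=m_{5k+1}$; by the assumption, $V(G)\setminus S_{4k+1}(u^*)$ contains at least $n/2$ vertices, and each of them has its unique entry point to $V(H)$ in the far set $T:=V(H)\setminus N_{4k+1,H}(u^*)$, that is, in the set of $H$-vertices at $G$-distance at least $4k+2$ from $u^*$.

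The deviation step is then as follows: pick a maximal set of pivots $v_1,\dots,v_t\in T$ whose pairwise $G$-distances are at least $2k+1$ (equivalently, pairwise $H$-distances, by Remark~\ref{rem:distH}), and analyse the unilateral deviation in which $u^*$ additionally buys the $t$ direct links $u^*v_i$. By maximality of the pivot set, every $w\in T$ lies within $G$-distance $2k$ of some pivot $v_i$; so for any $z\in V(G)\setminus S_{4k+1}(u^*)$ with entry point $w$ covered by $v_i$, the new distance from $u^*$ is at most $1+d_G(v_i,w)+d_G(w,z)\leq 2k+1+d_G(w,z)$, while the original equals $d_G(u^*,w)+d_G(w,z)\geq 4k+2+d_G(w,z)$, yielding a saving of at least $2k+1$ per such $z$. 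Summing over the $\geq n/2$ vertices outside $S_{4k+1}(u^*)$ gives a total saving of at least $(2k+1)n/2$, so the Nash inequality $t\alpha\geq (2k+1)n/2$ forces $t\geq k/4$ in the only relevant range $\alpha<4n$ (the complementary range $\alpha\geq 4n-13$ is already excluded, since by~\cite{Lenznertree} every \NE would then be a tree and no biconnected $H$ could exist).

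The main obstacle is the final counting step. The pairwise spacing forces the $k$-neighborhoods $N_{k,H}(v_1),\dots,N_{k,H}(v_t)$ to be pairwise disjoint subsets of $V(H)$, each of cardinality at least $m_k$ by definition, so their disjoint union contains at least $tm_k\geq (k/4)m_k$ vertices of $V(H)$. The difficulty is to place this disjoint union inside $N_{5k+1,H}(u^*)$: the deviation needs pivots at $G$-distance at least $4k+2$ from $u^*$, whereas $N_{k,H}(v_i)\subseteq N_{5k+1,H}(u^*)$ would require $d_G(u^*,v_i)\leq 4k+1$. I plan to resolve this tension either by restricting the pivot selection to the thin shell $N_{5k+1,H}(u^*)\setminus N_{4k+1,H}(u^*)$ and rerunning the deviation on the portion of $T$ actually covered by this shell, or by pulling back each $v_i$ along a shortest $u^*$--$v_i$ path (which lies entirely in $V(H)$ by Remark~\ref{rem:distH}) to a surrogate at $G$-distance exactly $4k+1$ from $u^*$, and then checking that the $2k+1$ spacing still gives pairwise disjoint $k$-neighborhoods around the surrogates inside $N_{5k+1,H}(u^*)$. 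Either resolution yields $|N_{5k+1,H}(u^*)|\geq (k/4)m_k$, which is precisely $m_{5k+1}\geq m_kk/4$; all remaining pieces are routine adaptations of the Demaine argument to the $H$-setting.
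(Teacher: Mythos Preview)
Your overall strategy is the same as the paper's, and the deviation and savings computation are fine. The genuine gap is exactly the one you flag in your last paragraph, and neither of your two proposed fixes goes through as stated. For the ``thin shell'' option, a pivot $v_i$ at $G$-distance up to $5k+1$ from $u^*$ has $N_{k,H}(v_i)$ reaching vertices at distance up to $6k+1$, so you still cannot place the disjoint $k$-balls inside $N_{5k+1,H}(u^*)$. For the ``pull-back surrogate'' option, two far-apart pivots $v_i,v_j$ may well have shortest $u^*$--paths that merge before hitting the sphere of radius $4k+1$, so their surrogates can coincide (or be at distance $\le 2k$), destroying the disjointness of the $k$-balls.

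The paper avoids the issue entirely by choosing the pivots on the \emph{exact} sphere from the start: take $Z=\{v_1,\dots,v_l\}\subseteq V(H)$ maximal among vertices with $d_G(u^*,v)=4k+1$ and pairwise $G$-distance $\ge 2k+1$. The coverage step still works, because any $w\in V(H)$ with $d_G(u^*,w)\ge 4k+1$ has a shortest $u^*$--$w$ path (inside $H$, by Remark~\ref{rem:distH}) passing through some vertex $w_\pi$ at distance exactly $4k+1$ from $u^*$, and maximality of $Z$ on the sphere gives $d_G(w_\pi,v_i)\le 2k$ for some $i$; rerouting through $v_i$ and $w_\pi$ yields a saving of at least $2k$ for every $z\notin S_{4k+1}(u^*)$, hence $l\ge k/4$ from $\alpha<4n$. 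And now the containment is automatic: since each $v_i$ sits at distance exactly $4k+1$, every $N_{k,H}(v_i)\subseteq N_{5k+1,H}(u^*)$, and pairwise spacing $\ge 2k+1$ makes the $k$-balls disjoint, giving $m_{5k+1}=|N_{5k+1,H}(u^*)|\ge l\,m_k\ge (k/4)m_k$. Replacing your $T$-based pivot set by this sphere-based one is the one missing idea; everything else in your outline is correct.
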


\begin{proof}
If there is a vertex $u\in V(H)$ with $|S_{4k+1}(u)| > n/2$, then the claim is obvious. Otherwise, for every vertex $u \in V(H)$, $|S_{4k+1}(u)| \leq n/2$. Let $u$ be any node from $V(H)$ minimising the cardinality of the balls of radius $5k+1$ intersected with $V(H)$. That is, $u$ is any node from $V(H)$ with $|N_{5k+1,H}(u)| = m_{5k+1}$. Let $Z=\left\{v_1,...,v_l\right\}$ be any maximal set of nodes from $V(H)$ at distance $4k+1$ from $u$ (in $H$) with the property that every two distinct nodes $v_i,v_j \in Z$, we have that $d_G(v_i,v_j) \geq 2k+1$ (see the left picture from Figure 2 for a visual clarification). 

Now, consider the deviation of $u$ that consists in buying the links to every node from $Z$ and let $G'$ be the new graph resulting from such deviation. Let  $z \in S(w)$ with $w \in V(H)$ and $d_G(w,u) \geq 4k+1$ and consider any shortest path (in $H$) from $w$ to $u$. Let $w_{\pi}$ be the node from any such shortest path at distance $4k+1$ from $u$. By the maximality of $Z$ there exists at least one node $v_w \in Z$ for which $d_G(v_w,w_{\pi}) \leq 2k$. The original distance between $z$ and $u$ is $d_G(z,u)=d_G(z,w)+d_G(w,u)$. In contrast, the distance between $z$ and $u$ in $G'$ satisfies the following inequality (see the right picture from Figure 2 for a visual clarification): 
$$d_{G'}(z,u) \leq 1+d_G(v_{w},w_{\pi})+d_G(w_{\pi},w)+d_G(w,z)  $$
$$\leq 1+2k+(d_G(u,w)-(4k+1))+d_G(w,z)=-2k+d_G(u,w)+d_G(w,z)$$

\begin{figure}
\label{fig:growingball}
\begin{center}
\includegraphics[scale=0.4]{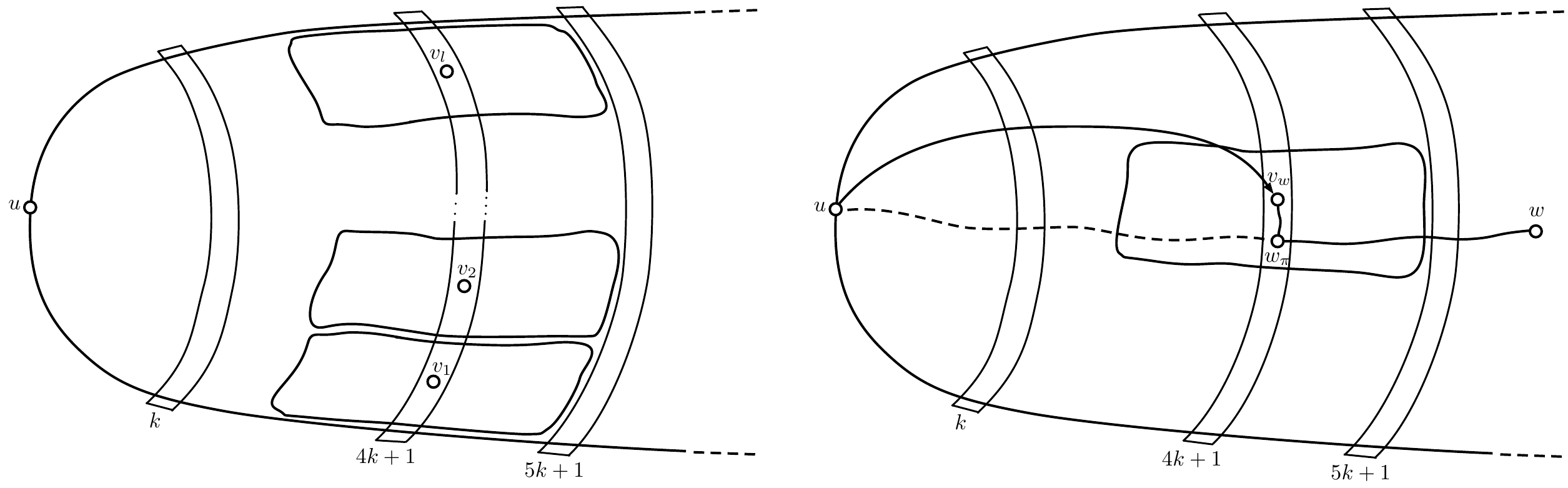}
\end{center}
\caption{The setting of nodes from the proof (left) and the alternative path from $w$ to $u$ in the deviated graph (right)}
\end{figure}

Therefore, $d_G(z,u)-d_{G'}(z,u) \geq 2k$. Since we are assuming that $|S_{4k+1}(u)| \leq n/2$ then this means that $\sum_{ \left\{ v \in V(H) \mid d_G(v,u) > 4k+1 \right\} }|S(v)| \geq n/2$, that is, the sum of the weights of the nodes from $H$ at distance strictly greater than $4k+1$ from $u$ is greater than or equal $n/2$. Then $\Delta C$, the cost difference for $u$ associated to such deviation, satisfies: 

$$\Delta C \leq l\alpha-2k\left( \frac{n}{2} \right) \leq 4nl- kn$$

Since $G$ is a \NE then from this we conclude that $l \geq k/4$. 
 
Finally, notice that the distance between two nodes in $Z$ is at least $2k+1$ implying that the set of all the balls of radius $k$ with centers at the nodes from $Z$ are mutually disjoint. Therefore, $m_{5k+1} =|N_{5k+1,H}(u)|  \geq l m_k \geq m_kk/4$. 

\end{proof}
\QEDB

\begin{lemma}\label{lem:reach}
If $r < d_H/4-4$ then $|S_{r}(u)| \leq n/2$ for every node $u\in V(H)$.
\end{lemma}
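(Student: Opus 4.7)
The plan is to argue by contradiction: fix $u\in V(H)$, suppose that $|S_{r}(u)|>n/2$, and exhibit a profitable deviation for some player that forces $r\geq d_H/4-4$. Since \cite{Lenznertree} shows that every \NE is a tree for $\alpha>4n-13$, I may assume throughout that $\alpha\leq 4n-13$. First, I would select a node $v\in V(H)$ that lies far from $u$ in $H$: let $a,b\in V(H)$ realise the diameter of $H$, so $d_H(a,b)=d_H$; the triangle inequality at $u$ gives $d_H(u,a)+d_H(u,b)\geq d_H$, and after relabelling I may set $v:=a$ with $d_H(u,v)\geq d_H/2$. By Remark~\ref{rem:distH}, $d_G(u,v)=d_H(u,v)\geq d_H/2$.

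Next, I would consider the deviation of player $v$ consisting of buying a single new edge to $u$, and lower bound the resulting decrease in $D_{G'}(v)$. For $z\in S_{r}(u)$, let $w(z)\in V(H)$ be the unique cut vertex with $z\in S(w(z))$, so that $d_H(u,w(z))\leq r$ and every shortest path from $z$ to a node of $V(H)$ in $G$ passes through $w(z)$. Two applications of Remark~\ref{rem:distH} give
$$d_G(v,z)=d_H(v,w(z))+d_G(w(z),z)\geq\bigl(d_H/2-r\bigr)+d_G(w(z),z),$$
while routing $v\to u$ along the new edge and then following a shortest path in the original graph yields
$$d_{G'}(v,z)\leq 1+d_G(u,z)=1+d_H(u,w(z))+d_G(w(z),z)\leq(1+r)+d_G(w(z),z).$$
Hence each $z\in S_{r}(u)$ contributes a saving of at least $d_H/2-2r-1$, which is strictly positive whenever $r<d_H/4-4$.

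Summing over $z\in S_{r}(u)$ and using $|S_{r}(u)|>n/2$, the total saving strictly exceeds $(d_H/2-2r-1)(n/2)$; the \NE condition then forces $\alpha$ to be at least this amount, and combined with $\alpha\leq 4n-13$ this gives, after algebra, $r>d_H/4-9/2+13/n$, which contradicts $r<d_H/4-4$ once $n$ exceeds a small constant (small $n$ can be handled directly, since $r<d_H/4-4$ already forces $d_H>16$ and hence $n$ above a fixed threshold). The main obstacle I anticipate is pinning the constant down to exactly $d_H/4-4$ rather than the naive $d_H/4-9/2$: this requires simultaneously exploiting the strict inequalities $|S_{r}(u)|>n/2$ and $\alpha\leq 4n-13$, and if necessary refining the per-node saving estimate by distinguishing inner shells $\{w\in V(H):d_H(u,w)\leq t\}$ with $t<r$, where the saving $d_H/2-2t-1$ is strictly larger than $d_H/2-2r-1$. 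The remaining manipulations are routine and rest only on Remark~\ref{rem:distH} and the cut-vertex decomposition.
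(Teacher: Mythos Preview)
Your overall strategy coincides with the paper's: assume $|S_r(u)|>n/2$ for some $u\in V(H)$, choose a node of $H$ at distance at least $d_H/2$ from $u$ (you take a diameter endpoint; the paper takes $t$ with $d_H(u,t)=d_H/2$), let that node buy an edge to $u$, and show the saving on $S_r(u)$ exceeds $\alpha$. Your per-node saving bound $d_H/2-2r-1$ is exactly what the paper obtains.

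The constant gap you flag is genuine, but neither of your proposed fixes closes it. Exploiting the strictness of $|S_r(u)|>n/2$ yields nothing quantitative, and the extra $13/n$ coming from $\alpha\leq 4n-13$ vanishes as $n$ grows, so you remain stuck at $r>d_H/4-9/2$, which is compatible with $r<d_H/4-4$. The inner-shell refinement does not help either, since you have no lower bound on how many of the $>n/2$ nodes sit in the inner shells; all of them could have $d_H(u,w(z))$ equal to $r$.

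The missing step is much simpler: the saving $d_G(v,z)-d_{G'}(v,z)$ is an \emph{integer}, and from $r<d_H/4-4$ you already have $d_H/2-2r-1>7$, hence the per-node saving is at least $8$. Then
\[
\Delta C \;\leq\; \alpha - 8\,|S_r(u)| \;<\; 4n - 8\cdot\frac{n}{2} \;=\; 0,
\]
contradicting the \NE condition directly. This integrality rounding is precisely how the paper finishes; no shell decomposition or sharpening of the strict inequalities is needed.
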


\begin{proof}
Suppose the contrary and we reach a contradiction, that is, suppose that there exists some $u\in V(H)$ with $|S_{r}(u)| > n/2$  and $r < d_H/4-4$. Let $t \in V(H)$ be any node at distance $d_H/2$ from $u$, which always exists. We consider the deviation of $t$ that consists in buying a link to $u$ and we define $G'$ to be the new graph resulting from such deviation. Let $z\in S_r(u)$ with $w\in V(H)$ such that $z \in S(w)$. The distance between $t$ and $w$ in $G$ is at least $d_H/2-r$  so the distance between $t$ and $z$ in $G$ is at least $d_H/2-r+d_G(w,z)$. In contrast, the distance between $t$ and $w$ in $G'$ is at most $1+r$, so the distance between $t$ and $z$ in $G'$ is at most $1+r+d_G(w,z)$. Therefore:
 
$$d_{G}(z,t)-d_{G'}(z,t) \geq d_H/2-2r-1 > d_H/2-2(d_H/4-4)-1 =7$$
Then $d_G(z,t)-d_{G'}(z,t) \geq 8$ and thus $\Delta C$, the cost difference of $t$ associated to such deviation, satisfies: 
$$\Delta C \leq \alpha -8|S_r(u)| \leq 4n-8|S_r(u)| < 4n-\frac{8}{2}n =0$$
A contradiction with the fact that $G$ is a \NE. 
\end{proof}
\QEDB

Combining these results we are able to give an extension of the result from Demaine et al in \cite{Demaineetal:07}:

\begin{proposition}
\label{thm:dhnh2} $d_H <  5^{\sqrt{2 \log_5 n_H}+5}$.

\end{proposition}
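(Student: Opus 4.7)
The plan is to argue by contradiction, assuming $d_H \geq 5^{\sqrt{2\log_5 n_H}+5}$, and to iterate Lemma~\ref{lem:ball} in the style of Demaine et al.\ to grow a $V(H)$-neighbourhood faster than geometrically until its size exceeds $n_H$. The first step is to define a sequence of radii by $k_0 = 1$ and $k_{i+1} = 5k_i + 1$. A direct induction yields the closed form $k_i = (5^{i+1}-1)/4$, and in particular the clean identity $4k_i + 1 = 5^{i+1}$, which makes the radius appearing in the hypothesis of Lemma~\ref{lem:reach} exactly a power of $5$. This is the algebraic trick that will let the constants come out manageable.

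Next I would combine the two lemmas. Let $i^*$ be the largest index for which $5^{i^*+1} = 4k_{i^*}+1 < d_H/4 - 4$; under the standing contradiction hypothesis we have $i^* \geq \log_5 d_H - O(1)$. For every $0 \leq i < i^*$, Lemma~\ref{lem:reach} rules out the existence of any $u \in V(H)$ with $|S_{4k_i+1}(u)| > n/2$, and so the first alternative of Lemma~\ref{lem:ball} cannot occur. The second alternative therefore fires at every step, giving the recurrence
$$ m_{k_{i+1}} \;\geq\; m_{k_i}\cdot \frac{k_i}{4}. $$
Iterating from the trivial seed $m_{k_1} \geq 1$ (since every neighbourhood contains its centre) and using the lower bound $k_j \geq 5^j$ for $j \geq 1$, one telescopes to
$$ m_{k_{i^*}} \;\geq\; \prod_{j=1}^{i^*-1} \frac{k_j}{4} \;\geq\; \frac{5^{i^*(i^*-1)/2}}{4^{\,i^*-1}}. $$

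For the final step I would compare this with the obvious upper bound $m_{k_{i^*}} \leq n_H$ and take $\log_5$ of both sides, obtaining a quadratic inequality $i^*(i^*-1)/2 - (i^*-1)\log_5 4 \leq \log_5 n_H$. Substituting $i^* \geq \log_5 d_H - O(1)$ and solving the quadratic in $L := \log_5 d_H$ produces $L \leq \sqrt{2\log_5 n_H} + 5$, matching the target. For very small $d_H$ (say $d_H < 5^5$) the conclusion is vacuous because $5^{\sqrt{2\log_5 n_H}+5}$ already exceeds $5^5$, so no separate base case is needed. The main obstacle is not conceptual but bookkeeping: one must choose $k_0$, the threshold $i^*$, and the starting iterate of the telescoping product so that the two $O(1)$ slacks (from Lemma~\ref{lem:reach}'s $d_H/4-4$ cutoff and from the wasted $k_0/4 = 1/4$ factor at the bottom of the product) combine to fit inside the constant $5$ sitting in the exponent of the target bound. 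Everything else is routine once the identity $4k_i+1 = 5^{i+1}$ is in hand.
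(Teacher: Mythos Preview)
Your proposal is correct and follows essentially the same approach as the paper: both iterate Lemma~\ref{lem:ball} along a sequence satisfying $k_{i+1}=5k_i+1$, invoke Lemma~\ref{lem:reach} to exclude the first alternative whenever $4k_i+1<d_H/4-4$, telescope the recurrence $m_{k_{i+1}}\geq m_{k_i}k_i/4$, and compare the result with $n_H$. The only cosmetic difference is the seed: the paper takes $a_0=21$ and uses $m_{21}\geq 21$ to gain an extra factor $>5$ per step, while you take $k_0=1$ (exploiting the clean identity $4k_i+1=5^{i+1}$) and seed with the trivial $m_{k_1}\geq 1$, compensating via a slightly more careful quadratic estimate---both choices land inside the $+5$ in the exponent.
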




\begin{proof}

Consider the following sequence of numbers $(a_i)_{i \geq 0}$ defined in the following way: 

(i) $a_0=21$.

(ii) $a_{i+1} = 5a_i+1$ for $i \geq 0$. 

It is easy to check that $a_k = 21 \cdot 5^k + \frac{5^k-1}{5-1}$ for any $k \geq 0$ so that $22\cdot 5^k > a_k \geq  21 \cdot 5^k$. With this definition and using the two previous results we reach the conclusion that whenever $4a_i+1 < d_H/4-4$, then $|S_{4a_{i}+1}(u)| \leq n/2$ for all $u\in V(H)$, by Lemma \ref{lem:reach}, implying $m_{a_{i+1}} \geq m_{a_i}\frac{a_i}{4}$, by Lemma \ref{lem:ball}. Iterating the recurrence relation we can see that whenever $i \geq 0$ and $4a_i+1 < d_H/4-4$, then: 
$$m_{a_{i+1}} \geq  \frac{a_i a_{i-1}...a_1a_0}{4^{i+1}}m_{a_0}$$

Since $a_0=21$ then $m_{a_0} \geq 21$. Therefore: 

$$m_{a_{i+1}} \geq 21 \left(\frac{21}{4}\right)^{i+1} 5^{i+(i-1)+...+1+0} >5^{i^2/2}$$

Now, consider the value $k$ such that $4a_k+1 < d_H/4 -4\leq 4a_{k+1}+1$. On the one hand, $n_H \geq m _{a_{k+1}} >  5^{k^2/2}$ so this implies that $k\leq \sqrt{2 \log_5 n_H}$. On the other hand, $d_H/4 \leq 4a_{k+1}+5 < 22 \cdot 4 \cdot 5^{k+1}$. Therefore, $d_H < 5^{k+5}\leq 5^{\sqrt{2\log_5 n_H}+5}$, as we wanted to see.

\end{proof}
\QEDB

\section{The diameter of $G$ vs the diameter of $H$.}
\label{sec:diamH2}

In this section we establish a relationship between the diameter of $G$ and the diameter of $H$ when $\alpha > n$. Since for $\alpha > 4n-13$ every \NE is a tree it is enough if we study the case $n < \alpha < 4n$. 

We show that in this case, the distance between any pair $w,z \in V(G)$ where $z \in S(w)$, is upper bounded by $125$ from where we can conclude that $d_G < d_H+250$. To obtain these results we basically exploit the fact that $G$ is a \NE graph together with key topological properties of biconnected components: 


\begin{proposition}
\label{prop:diameter} Let $w \in V(H)$ and $z \in S(w)$ maximising the distance to $w$. Then $d_G(z,w) < 125$.
\end{proposition}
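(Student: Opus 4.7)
I would argue by contradiction: suppose $k := d_G(z,w) \geq 125$ and exhibit a deviation for some player that strictly reduces its cost. The two structural facts I would exploit are: (i) $w$ is a cut vertex of $G$ (because $H$ is biconnected and $z \in S(w) \setminus \{w\}$), so every shortest path from $z$ to any $v \in V(G) \setminus S(w)$ must cross $w$; and (ii) the shortest $z$-to-$w$ path $P: z = z_0, z_1, \ldots, z_k = w$ lies entirely in $S(w)$ and hits every BFS-level $j$ from $w$, so $d_G(w, z_j) = k - j$ and $d_G(z, z_j) = j$. Together with $n_H \geq 3$ (since $H$ is a non-trivial biconnected component) and the biconnectedness of $H$, this also ensures the existence of at least two vertices in $V(G) \setminus S(w)$ and of a vertex $u \in V(H) \setminus \{w\}$ adjacent to $w$ in $H$.

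The first deviation I would analyse is $z$'s purchase of the edge $zw$, which costs $\alpha < 4n$. By (i), each $v \in V(G) \setminus S(w)$ contributes savings $\geq k-1$. By (ii) and the reverse-triangle bound $d_G(z,v) \geq k - d_G(w,v)$ for $v \in S(w)$, each path-vertex $z_j$ with $j > (k+1)/2$ yields savings $2j - k - 1$, summing to roughly $k^2/4$. Thus the total savings is at least $(k-1)(n - |S(w)|) + \Theta(k^2)$. Imposing the NE inequality $\text{savings} \leq \alpha < 4n$ and using $n - |S(w)| \geq n_H - 1 \geq 2$ already contradicts $k \geq 125$, except possibly in the regime where $|S(w)|$ is essentially all of $V(G)$.

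To handle that remaining regime, I would invoke a second deviation involving the vertex $u \in V(H) \setminus \{w\}$ adjacent to $w$ in $H$: consider $u$'s purchase of the edge $uz$. By the symmetric path-summation trick, but now running over path-vertices $z_j$ with $j < k/2$ (so that the new shortcut $uz$ shortens $u$-to-$z_j$ distance), the savings again sum to $\Theta(k^2)$. The NE inequality applied to this deviation then provides a complementary constraint, which combined with the extreme-$|S(w)|$ inequality from the first deviation rules out $k \geq 125$.

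The main obstacle I expect is the precise combinatorial accounting: one must carefully bound the contribution of off-path vertices inside $S(w)$ (which can contribute positively or negatively depending on whether they lie closer to $w$ or to $z$), and must match the numerical constant $125$ to the threshold where the $\Theta(k^2)$ path savings combined with the $(k-1) \cdot 2$ minimal contribution from $V(G) \setminus S(w)$ just exceed $\alpha < 4n$. It is also likely helpful to split a side case in which $z$ is a leaf of $G$: there the zero-cost swap of $z$'s unique edge for $zw$ gives a particularly clean contradiction, since the path contribution vanishes in net and one is left with a net saving of $(k-1)(n - |S(w)| - 1) > 0$.
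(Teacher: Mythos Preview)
Your plan has a genuine gap: the $\Theta(k^2)$ savings coming from the path vertices is useless here. For any fixed $k$ (such as $k=125$) this quantity is a constant, whereas the cost of the bought edge is $\alpha$, which lies between $n$ and $4n$. So from your first deviation the NE inequality only gives
\[
(k-1)\bigl(n-|S(w)|\bigr)+\tfrac{k^2}{4}\;\le\;4n,
\]
and with $n-|S(w)|\ge 2$ this yields nothing better than $k=O(\sqrt{n})$. The same problem afflicts your second deviation: when $u$ buys $uz$, the savings on a vertex $v\in S(w)$ are $\max\{0,\,d_G(w,v)-d_G(z,v)\}$, and off-path vertices may well satisfy $d_G(w,v)\approx d_G(z,v)$, contributing essentially nothing. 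Adding the two deviations does not help either, since a vertex with $d_G(w,v)=d_G(z,v)$ contributes zero to both. In short, neither deviation produces the $\Omega(n)$ savings needed to contradict $\alpha<4n$.

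What is missing is a mechanism to force $n-|S(w)|\ge cn$ for some constant $c>0$; only then does the term $(k-1)(n-|S(w)|)$ in your first deviation beat $4n$ and yield $k<$ constant. The paper obtains this via a completely different kind of deviation that you did not consider. First it pairs your ``$z$ buys $zw$'' deviation with the symmetric ``farthest node in $V(G)\setminus S(w)$ buys a link to $w$'' deviation to show that the radius of the outside part $W$ from $w$ is at most $8$. Then, exploiting biconnectedness of $H$, it finds an edge $e=(t,t')\in E(H)$ bought by some $t$ within distance~$2$ of $w$, and analyses either the \emph{deletion} of $e$ or the \emph{swap} of $e$ for $(t,w)$. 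Because $H\subseteq W$ has small radius, deleting $e$ increases distances only to the at most $n-|S(w)|$ vertices of $W$, by at most a bounded amount; the NE condition for this deviation then forces $|S(w)|\le \tfrac{30}{31}n$. Plugging this back into the ``$z$ buys $zw$'' inequality gives $k<125$. Your proposal never touches edge-deletion or edge-swap deviations inside $H$, and without them the argument cannot close.
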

\begin{proof}
Let $Z$ be the subgraph of $G$ induced by $S(w)$ and $W$ the subgraph of $G$ induced by $w$ together with the set of nodes $V(G) \setminus S(w)$.  Then, define $r = d_G(z,w) = \max_{t \in V(Z)}d_G(w,t),s  = \max_{t \in V(W)}d_G(w,t)$ (see the figure below for clarifications). With these definitions it is enough to show that $r < 125$. Notice that, for instance, if $S(w) = \left\{ w \right\}$ then the result trivially holds. 

\begin{figure}[H]
\begin{center}
\includegraphics[scale=0.45]{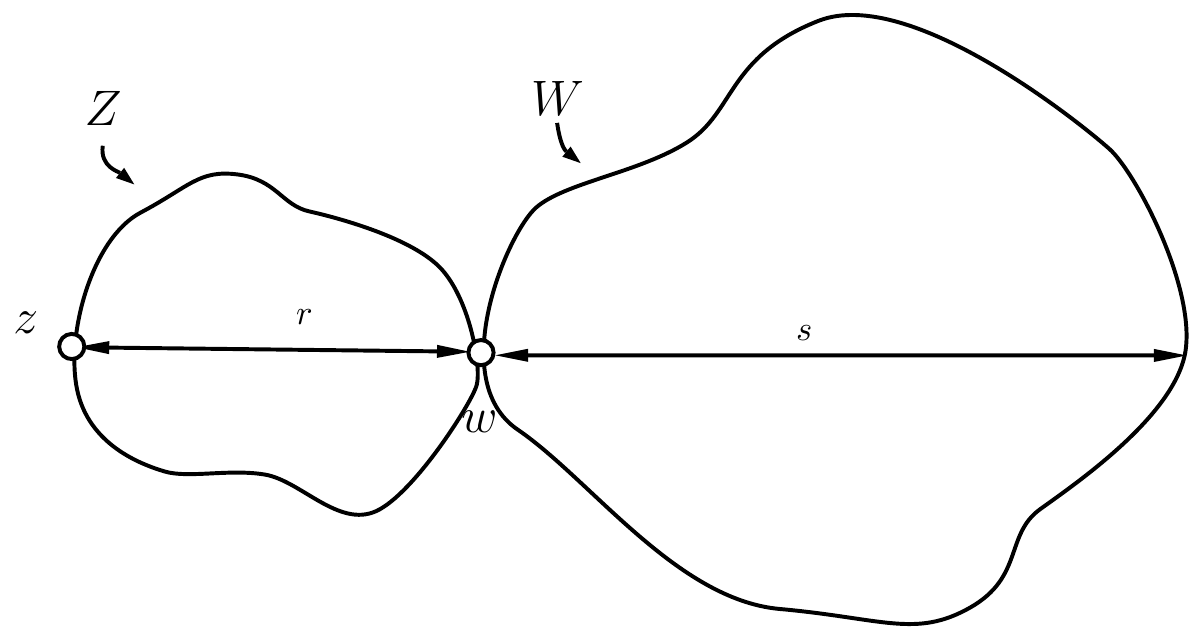}
\end{center}
\caption{The most important subsets, nodes and distances from the setting.}
\end{figure}

First, let us see that $\min(r,s) \leq 8$.

Let $v$ any node maximising the distance to $w$ in $W$ and $\Delta C_1$ and $\Delta C_2$ the corresponding cost differences of players $z$ and $v$, respectively, associated to the deviations of the same players that consist in buying a link to $w$. Then: 
$$\Delta C_1 \leq \alpha - |V(W)|(r-1) $$
$$ \Delta C_2 \leq \alpha - |V(Z)|(s-1)$$
Adding up the two inequalities and using that $\alpha < 4n$: 

$$ \Delta C_1+ \Delta C_2 \leq 2\alpha-(\min(r,s)-1)(|V(Z)|+|V(W)|) < 8n - (\min(r,s)-1)n$$

Since $G$ is a \NE graph then $\Delta C_1 + \Delta C_2 \geq 0$ and from here we deduce that $\min(r,s) \leq 8$, as we wanted to see.

\vskip 10pt

 If $r \leq 8$ then we are done. Therefore we must address the case $s \leq 8$. 
 
Next, since $H$ is a non-trivial biconnected component, there exist nodes $t,t'\in V(H)$ such that they are adjacent in $H$, $t$ has bought the link $e=(t,t')$ and one of the two following cases happen: either (i) $t$ is at distance $1$ from $w$, $t'$ is at distance $1$ or $2$ from $w$ or (ii) $t'$ is at distance $1$ from $w$ and $t$ at distance $2$ from $w$ (see the figure below for a clarification). 

\begin{figure}[H]
\begin{center}
\includegraphics[scale=0.45]{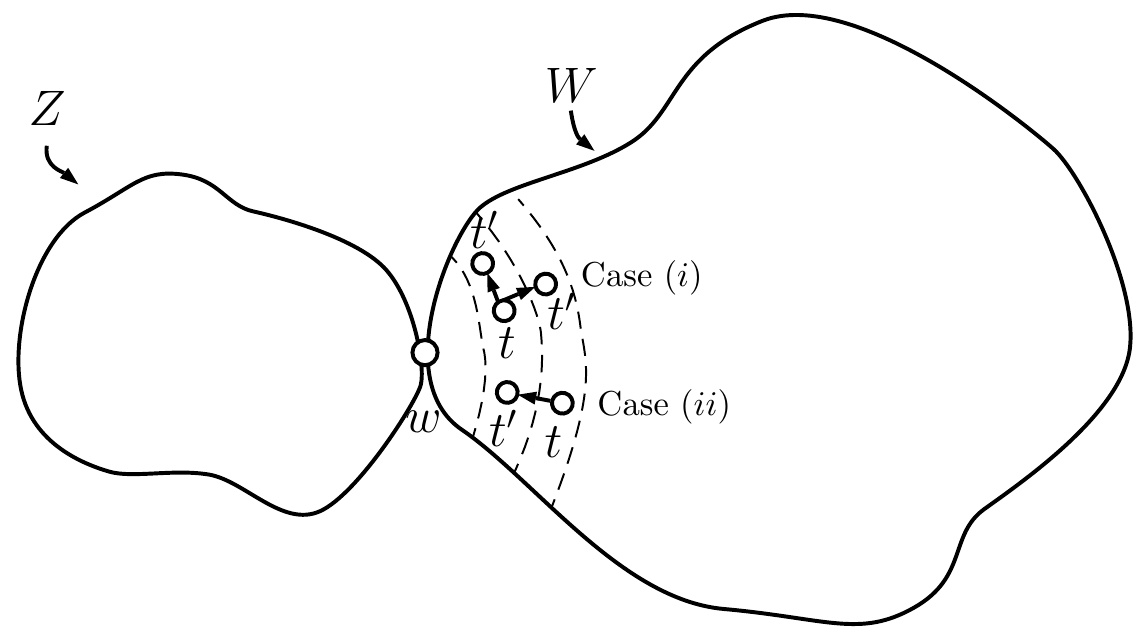}
\caption{An image depicting the setting for case (i) and case (ii).}

\end{center}
\end{figure}

In case $(i)$ we deduce that $|S(w)| = |V(Z)| \leq n \frac{4s-2}{4s-1} \leq n \frac{30}{31}$. This is because of the following reasoning. Let $\Delta C_{delete}$ be the corresponding cost difference of player $t$ associated to the deviation of the same player that consists in deleting the edge $e$. Since $H$ is biconnected then there exists a loop going through $e$ and contained in $H$ of length at most $4s+1$. Notice that when deleting $e$, $t$ only increases the distances maybe to the nodes from $V(W) \setminus \left\{w\right\}$ but not to the nodes from $V(Z)$ by at most $4s-1$ distance units. Therefore: 
$$\Delta C_{delete} \leq -\alpha + (4s-1)(n-|V(Z)|) < -n+(4s-1)(n - |V(Z)|)$$ 
Since $G$ is a \NE graph then $\Delta C_{delete} \geq 0$ and from here, using the hypothesis $s \leq 8$, we deduce the conclusion: 
$$|V(Z)| < \frac{-n+n(4s-1)}{4s-1} = n\frac{4s-2}{4s-1} \leq \frac{30}{31}n$$

\vskip 5pt

In case $(ii)$ we deduce that $|S(w)| = |V(Z)| \leq n/2$. This is because of the following reasoning. Let $\Delta C_{swap}$ be the corresponding cost difference  of player $t$ associated to the deviation of the same player that consists in swapping the edge $e$ for the link $(t,w)$. Notice that when performing such swap, $t$ only increases the distances maybe to the nodes from $V(W) \setminus \left\{w\right\}$ but strictly decreases for sure, one unit distance to all the nodes from $V(Z)$. Therefore: 

$$\Delta C_{swap} \leq -|V(Z)| + (n-|V(Z)|) \leq n -2 |V(Z)|$$ 

Since $G$ is a \NE graph then $\Delta C_{swap} \geq 0$ and from here we deduce the conclusion $|V(Z)| \leq n/2$.

Hence, we have obtained that either $|S(w)| \leq \frac{30}{31}n$, in case (i), or $|S(w)| \leq \frac{n}{2}$, in case (ii).

\vskip 5pt

Finally, consider the deviation of $z$ that consists in buying the link to $w$. Then the corresponding cost difference $\Delta C_{buy}$ satisfies the following inequality: 

$$\Delta C_{buy} \leq \alpha - (r-1)(n-|S(w)|) < 4n - (r-1)(n-|S(w)|)$$

Since $G$ is a \NE graph, then $\Delta C_{buy} \geq 0$ so that we conclude that $r < \frac{4n}{n-|S(w)|}+1$. Using this property we conclude that $r < 125$ in case (i) and $r \leq 8$ in case (ii), so we are done.

\end{proof}
\QEDB

As a consequence:

\begin{theorem}
\label{corol:diameter} $d_G < d_H+250$.
\end{theorem}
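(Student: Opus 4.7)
The plan is to lift Proposition~\ref{prop:diameter} (which bounds by $125$ the distance from any cut vertex $w \in V(H)$ to the farthest node of $S(w)$) to a global diameter bound. Let $x, y \in V(G)$ be two vertices realising $d_G = d_G(x, y)$. First I would attach each endpoint to a vertex of $H$: for every $z \in V(G)$ there is a unique $w_z \in V(H)$ with $z \in S(w_z)$, namely $w_z = z$ when $z \in V(H)$, and otherwise the unique cut vertex of $G$ through which the component of $G - V(H)$ containing $z$ hangs off $H$. Proposition~\ref{prop:diameter} applied at $w_z$ (to the farthest node of $S(w_z)$, which dominates $d_G(z, w_z)$) yields $d_G(z, w_z) < 125$ uniformly over $z \in V(G)$.

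Next I would decompose the $xy$ shortest path via the cut vertex structure. Since $w_x$ is a cut vertex that separates $S(w_x) \setminus \{w_x\}$ from the rest of $G$ (and symmetrically for $w_y$), every path from $x$ to $y$ must visit both $w_x$ and $w_y$, so in the case $w_x \neq w_y$ we obtain the clean splitting
$$d_G(x, y) = d_G(x, w_x) + d_G(w_x, w_y) + d_G(w_y, y).$$
By Remark~\ref{rem:distH}, $d_G(w_x, w_y) = d_H(w_x, w_y) \leq d_H$. Combining with the two bounds $d_G(x, w_x), d_G(y, w_y) < 125$ gives $d_G = d_G(x, y) < d_H + 250$, as required. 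In the degenerate case $w_x = w_y$, both endpoints lie in $S(w_x)$, so the triangle-style bound $d_G(x, y) \leq d_G(x, w_x) + d_G(w_x, y) < 250 \leq d_H + 250$ handles this case directly (using $d_H \geq 1$ since $H$ is non-trivial).

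No substantive obstacle is expected: the heavy lifting is done by Proposition~\ref{prop:diameter}, and the only steps I need to be careful about are the bookkeeping around the three edge cases $x \in V(H)$, $y \in V(H)$, and $w_x = w_y$, each of which only makes some of the summands in the decomposition vanish without changing the inequality. The proof is therefore essentially a one-liner that glues Proposition~\ref{prop:diameter} with Remark~\ref{rem:distH} through the cut vertex property of biconnected components.
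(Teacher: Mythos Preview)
Your proposal is correct and matches the paper's approach exactly: the paper states Theorem~\ref{corol:diameter} simply ``As a consequence'' of Proposition~\ref{prop:diameter} without spelling out any details, and the argument you give---routing a diametral pair $x,y$ through their anchor vertices $w_x,w_y\in V(H)$, invoking Proposition~\ref{prop:diameter} for the two outer legs and Remark~\ref{rem:distH} for the middle leg---is precisely the intended one-line derivation.
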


\section{Combining the results}
\label{sec:last}

Finally, in this section we combine the distinct results obtained so far to prove the main conclusion. 

On the one hand, combining Proposition \ref{thm:1} with Proposition \ref{thm:dhnh2} we reach the following result for the average degree of $H$:

\begin{theorem}
\label{thm:degH}
$$deg(H) < 2 + \frac{16n}{\alpha - n}\frac{5^{2\sqrt{2 \log_5 n_H}+10}}{n_H}$$

\end{theorem}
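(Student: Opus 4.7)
The theorem is obtained by a direct substitution: plug the bound on $d_H$ from Proposition \ref{thm:dhnh2} into the upper bound on $deg(H)$ from Proposition \ref{thm:1}. My plan is to start from the inequality
$$deg(H) \leq 2 + \frac{16\, d_H(d_H+1)\, n}{n_H(\alpha - n)},$$
which holds whenever $\alpha > n$ and $H$ exists, and to replace the factor $d_H(d_H+1)$ by a function of $n_H$ alone using Proposition \ref{thm:dhnh2}.

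The only non-trivial (but still routine) step is to convert the bound $d_H < 5^{\sqrt{2\log_5 n_H}+5}$ into a usable bound on the product $d_H(d_H+1)$. Writing $B := 5^{\sqrt{2\log_5 n_H}+5}$, I would square the inequality to get $d_H^2 < B^2 = 5^{2\sqrt{2\log_5 n_H}+10}$, and then observe that, because $d_H$ is an integer, the strict inequality $d_H < B$ together with the ample slack available in the derivation of Proposition \ref{thm:dhnh2} (where we actually have $d_H < 352\cdot 5^{k+1}$ while $B = 5^{k+5} = 3125\cdot 5^k$) lets us absorb the extra linear term $d_H$ inside $B^2$ without inflating the leading constant $16$. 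Hence $d_H(d_H+1) < 5^{2\sqrt{2\log_5 n_H}+10}$.

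Substituting this back into Proposition \ref{thm:1} gives
$$deg(H) < 2 + \frac{16\, n}{\alpha - n} \cdot \frac{5^{2\sqrt{2\log_5 n_H}+10}}{n_H},$$
which is the claimed bound. There is no real obstacle here: the work has already been done in Propositions \ref{thm:1} and \ref{thm:dhnh2}, and this theorem is the clean algebraic consequence that isolates $n_H$ as the only parameter of $H$ left on the right-hand side — precisely the form needed in Section \ref{sec:last} to combine with a matching lower bound on $deg(H)$ and derive a contradiction when $n_H$ is large and $\alpha > n(1+\varepsilon)$.
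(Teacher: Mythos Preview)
Your proposal is correct and matches the paper's approach exactly: the paper states Theorem \ref{thm:degH} simply as the combination of Proposition \ref{thm:1} with Proposition \ref{thm:dhnh2}, without spelling out any intermediate step. Your observation that the slack in the proof of Proposition \ref{thm:dhnh2} (where in fact $d_H + 1 < 5^{k+5}$ as well) justifies $d_H(d_H+1) < 5^{2\sqrt{2\log_5 n_H}+10}$ is a detail the paper leaves implicit.
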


On the other hand, recall that from Lemma 4 and Lemma 2 from \cite{Mihalakmostly} and \cite{Mihalaktree}, respectively, the general lower bound $deg(H) \geq 2+ \frac{1}{16}$ that works for any $\alpha$ can be obtained.









With these results in mind we are now ready to prove the following strong result: 

\begin{theorem}
\label{thm:weakertreeconjecture}
Let $\epsilon > 0$ be any positive constant and $\alpha > n(1+\epsilon)$. There exists a constant $K_{\epsilon}$ such that every biconnected component $H$ from any non-tree Nash equilibrium $G$ has size at most $K_{\epsilon}$. 
\end{theorem}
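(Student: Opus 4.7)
The plan is to derive a contradiction between the refined upper bound on $deg(H)$ from Theorem \ref{thm:degH} and the known general lower bound $deg(H) \geq 2 + \tfrac{1}{16}$, unless $n_H$ is small enough to be absorbed into a constant depending only on $\epsilon$.

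First I would substitute the hypothesis $\alpha > n(1+\epsilon)$ into Theorem \ref{thm:degH}. Since $\alpha - n > n\epsilon$, the factor $\frac{16n}{\alpha - n}$ is bounded above by $\frac{16}{\epsilon}$, so the theorem yields
\[
deg(H) < 2 + \frac{16}{\epsilon}\cdot\frac{5^{2\sqrt{2\log_5 n_H}+10}}{n_H}.
\]
Combining this with the general lower bound $deg(H) \geq 2 + \tfrac{1}{16}$ and cancelling the $2$'s gives
\[
\frac{1}{16} \leq \frac{16}{\epsilon}\cdot\frac{5^{2\sqrt{2\log_5 n_H}+10}}{n_H},
\]
or equivalently
\[
n_H \leq \frac{256}{\epsilon}\cdot 5^{2\sqrt{2\log_5 n_H}+10}.
\]

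Next I would solve this inequality for $n_H$. Taking $\log_5$ on both sides and letting $y = \sqrt{\log_5 n_H}$, it becomes a quadratic-type inequality of the form $y^2 \leq 2\sqrt{2}\,y + C_\epsilon$, where $C_\epsilon = \log_5(256/\epsilon) + 10$. This clearly forces $y$ to be at most some explicit constant $y_\epsilon$ depending only on $\epsilon$, and hence $n_H \leq 5^{y_\epsilon^2}$. Setting $K_\epsilon := 5^{y_\epsilon^2}$ completes the argument.

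The structure of the proof is therefore: (1) assume for contradiction a non-tree \NE $G$ exists, so that a non-trivial biconnected component $H$ of $G$ exists; (2) plug $\alpha > n(1+\epsilon)$ into Theorem \ref{thm:degH}; (3) pit it against the constant lower bound on $deg(H)$; (4) solve the resulting sub-exponential inequality in $n_H$ to extract $K_\epsilon$. There is no real obstacle here since all the machinery is in the preceding sections; the only point to handle cleanly is to make sure $H$ is \emph{non-trivial} (at least three nodes) so that the general lower bound from \cite{Mihalakmostly,Mihalaktree} applies, but this is guaranteed by the standing assumption that $G$ is a non-tree \NE, since a biconnected component with only two nodes corresponds to a tree edge that cannot persist in a biconnected component of size $\geq 2$ in this setting.
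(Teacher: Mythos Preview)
Your proof is correct and follows essentially the same approach as the paper: substitute $\alpha > n(1+\epsilon)$ into Theorem~\ref{thm:degH} to bound $\frac{16n}{\alpha-n}$ by $\frac{16}{\epsilon}$, then confront the resulting upper bound on $deg(H)$ with the general lower bound $deg(H) \geq 2+\tfrac{1}{16}$ to force $n_H$ to be bounded by a constant depending only on $\epsilon$. The paper's proof merely invokes the asymptotic mismatch between the two bounds without solving the inequality explicitly, whereas you go further and extract an explicit $K_\epsilon$; the only minor wording quibble is that step~(1) in your summary should not say ``assume for contradiction'' since the goal is to bound $n_H$, not to rule out $H$ altogether.
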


\begin{proof}
Let $G$ be any non-tree \NE graph. Then there exists at least one biconnected component $H$. By Theorem \ref{thm:degH} when $\alpha > n(1+\epsilon)$ we have that $deg(H) < 2+ \frac{16}{\epsilon}\frac{5^{2\sqrt{2\log_5 n_H}+10}}{n_H}$. On the other hand, we know that for any $\alpha$, $deg(H) \geq 2+\frac{1}{16}$. Then this implies that there exists a constant $K_{\epsilon}$ upper bounding the size of $H$, otherwise we would obtain a contradiction comparing the asymptotic behaviour of the upper and lower bounds obtained for $deg(H)$ in terms of $n_H$. 
\end{proof}
\QEDB

In other words, the biconnected component conjecture holds for $\alpha > n(1+\epsilon)$.

Furthermore, recall that it is well-known that the diameter of any graph plus one unit is an upper bound for the \PoA and the \PoA for trees is constant. Therefore, we conclude that:

\begin{theorem}\label{thm:main} Let $\epsilon > 0$ be any positive constant. The price of anarchy is constant for $\alpha > n(1+\epsilon)$. 
\end{theorem}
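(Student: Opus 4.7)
The plan is to perform a case analysis on whether the Nash equilibrium graph $G$ is a tree, relying on two standard facts recalled in the historical overview: the price of anarchy for tree equilibria is a universal constant (bounded by $5$ by Fabrikant et al.~\cite{Fe:03}), and for any \NE graph one has $\PoA \le d_G + 1$ (Demaine et al.~\cite{Demaineetal:07}). With these two facts the problem reduces entirely to controlling the diameter $d_G$ in the non-tree case.

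If $G$ is a tree, there is nothing to do: $\PoA \le 5$. Otherwise $G$ contains a cycle, and the biconnected component containing that cycle has at least three vertices, so it is a non-trivial biconnected component $H$ in the sense used throughout the paper. I would then chain the results of this paper in the obvious order. By Theorem~\ref{thm:weakertreeconjecture}, $|V(H)| \le K_\epsilon$ for a constant $K_\epsilon$ depending only on $\epsilon$, so $d_H \le K_\epsilon - 1$. By Theorem~\ref{corol:diameter}, which applies precisely because a non-trivial $H$ exists, $d_G < d_H + 250 \le K_\epsilon + 249$. Therefore $\PoA \le d_G + 1 < K_\epsilon + 250$.

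Combining the two cases yields $\PoA \le \max\{5,\, K_\epsilon + 250\}$, a constant depending only on $\epsilon$, which is the desired conclusion. The only subtlety is that Theorem~\ref{corol:diameter} is stated under the assumption that a non-trivial biconnected component $H$ exists, which is why the tree case must be isolated and dispatched separately via the Fabrikant et al.\ bound; beyond this bookkeeping, the argument is a clean concatenation of the three ingredients already established, so I do not anticipate any genuine obstacle.
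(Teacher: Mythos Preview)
Your proposal is correct and follows essentially the same approach as the paper: a case split on whether $G$ is a tree, invoking the $\PoA \le 5$ bound for trees in the first case, and in the second case chaining Theorem~\ref{thm:weakertreeconjecture} with Theorem~\ref{corol:diameter} to bound $d_G$ by a constant depending only on $\epsilon$, then applying $\PoA \le d_G + 1$. The paper's own proof is identical in structure and content, differing only in cosmetic details (it writes $d_H \le n_H \le K_\epsilon$ rather than your slightly sharper $d_H \le K_\epsilon - 1$).
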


\begin{proof}

Let $G$ be a \NE. If $G$ is a tree we are done, because the \PoA for trees is at most $5$. Therefore to prove the result consider that $G$ is a non-tree configuration. Then, $G$ has at least one non-trivial biconnected component $H$. On the one hand, by Theorem \ref{thm:weakertreeconjecture}, there exists a constant $K_{\epsilon}$ that upper bounds the size of  $H$. This implies that $d_H \leq n_H \leq K_{\epsilon}$. On the other hand, by Theorem \ref{corol:diameter}, $d_G \leq d_H+250$. In this way, $d_G \leq K_{\epsilon}+250$ and since $K_{\epsilon}+250$ is a constant, then the conclusion follows because the \PoA is upper bounded by the diameter plus one unit.

\end{proof}
\QEDB

\section{The conclusions}

The most relevant contribution we have made in this article is to show that the price of anarchy is constant for $\alpha > n(1+\epsilon)$. We have not been able to prove the tree conjecture for $\alpha > n$ by showing that there cannot exist any non-trivial biconnected component $H$ for the same range of $\alpha$. Instead, we have proved that for $\alpha > n(1+\epsilon)$, if $H$ exists, then it has a constant number of nodes. This property implies constant \PoA for the same range of $\alpha$. The technique we have used relies mostly on the improved upper bound on the term $deg(H)$ for $\alpha > n$. However, as in \cite{Mihalakmostly,Mihalaktree}, our refined upper bound still depends on the term $n/(\alpha-n)$, that tends to infinity when $\alpha$ approaches $n$ from above. This makes us think that either our technique can be improved even more to obtain the conclusion that the tree conjecture claims or it might be that there exist some non-tree equilibria when $\alpha$ approaches $n$ from above.


\begin{thebibliography}{8}

\bibitem{Alvarezetal}
\`Alvarez, C., Messegu\'e, A.: Network Creation Games: Structure vs Anarchy.  CoRR,
abs/1706.09132.  \url{http://arxiv.org/abs/1706.09132}. (2017)

\bibitem{Alvarezetal2}
\`Alvarez, C., Messegu\'e, A.: On the Constant Price of Anarchy Conjecture. CoRR,
abs/1809.08027.  \url{http://arxiv.org/abs/1809.08027}. (2018).

\bibitem{Albersetal:06}
Albers, S., Eilts, S., Even-Dar, E., Mansour, Y., and Roditty, L.: On {N}ash equilibria for a network creation game.
{\em {ACM} Trans. Economics and Comput.}, \textbf{2}(1):2, (2014).


\bibitem{AlonDHKL14}
Alon, N.,  Demaine, E.D., Hajiaghayi, M. T., Kanellopoulos, P., and Leighton, T.: Correction: Basic network creation games. {\em {SIAM} J. Discrete Math.}, \textbf{28}(3):1638--1640, (2014).


\bibitem{AlonDHL:10}
Alon, N.,  Demaine, E.D., Hajiaghayi, M. T., and Leighton, T.: Basic network creation games. {\em {SIAM} J. Discrete Math.}, \textbf{27}(2):656--668, (2013).


\bibitem{BiloGP:15}
Davide, B., Gual{\`{a}}, L., and Proietti, G.: Bounded-distance network creation games. {\em {ACM} Trans. Economics and Comput.}, \textbf{3}(3):16, (2015).


\bibitem{Lenznertree}
Bil{\`{o}}, D. and Lenzner, P.: On the Tree Conjecture for the Network Creation Game. STACS 2018, 14:1--14:15



\bibitem{CorboParkes:05}
Corbo, J. and Parkes, D.C.: The price of selfish behavior in bilateral network formation. {\em Proc.
  Symp. Principles of Distributed Computing (PODC 2005)}.  In Marcos~Kawazoe Aguilera and James Aspnes, editors,  pp 99--107, (2005).


\bibitem{Demaineetal:07}
Demaine, E.D., Hajiaghayi, M. T., Mahini, H., and Zadimoghaddam, M.: The price of anarchy in network creation games. PODC 2007, pp. 292--298, (2007).




\bibitem{Demaineetal:07-12}
Demaine, E.D., Hajiaghayi, M. T., Mahini, H., and Zadimoghaddam, M.:  The price of anarchy in network creation games. {\em ACM Transactions on Algorithms}, \textbf{8}(2):13, (2012).



\bibitem{Ehsani:2015}
Ehsani, S.,  Fadaee, S. S., Fazli, M., Mehrabian,   A.,  Sadeghabad, S. S.,  Safari, M., and Saghafian, M.:
A Bounded Budget Network Creation Game. ACM Transactions on Algorithms, \textbf{11}(4), 34.  (2015).



\bibitem{Fe:03}
Fabrikant, A., Luthra, A., Maneva, E. N.,  Papadimitriou, C.H., and Shenker, S.: On a network creation game. PODC  2003, pp. 347--351, (2003).

\bibitem{Koutsoupias:99}
Koutsoupias, E.,  Papadimitriou, C.H.: Worst-case Equilibria. Computer Science Review. \textbf{3}(2): 65-69 (1999).

\bibitem{LaLo:17}
Lavrov, M., Lo, P.S., and Messegu\'e, A.: Distance-Uniform Graphs with Large Diameter. SIAM J. Discrete Math., \textbf{33}(2), pp 994-1005. 

\bibitem{Lenzner_thesis}
Lenzner, P.: On Selfish Network Creation. Dissertation. (2014).



\bibitem{LeonardiSankowski:07}
 Leonardi, S. and  Sankowski, P.:  Network formation games with local coalitions. In: {\em Proc. Symp.
  Principles of Distributed Computing (PODC 2007)}, pp 299--305. Indranil Gupta and Roger Wattenhofer, editors, (2007).
  
  
  
\bibitem{Lin}
Lin, H.: On the price of anarchy of a network creation game. Class final project. (2003).


\bibitem{MMO-EC14}
 Meirom, E.A.,  Mannor, S., and Orda, A.: Network formation games with heterogeneous players and the internet
  structure. In:  {\em
  {ACM} Conference on Economics and Computation}, pp 735--752. Moshe Babaioff, Vincent Conitzer, and David Easley, editors, (2014).

\bibitem{Mihalakmostly}
Mihal{\'a}k, M., and Schlegel, J.C.: The Price of Anarchy in Network Creation Games Is (Mostly) Constant. {\em Theory of Computing Systems}, \textbf{53}(1), pp. 53--72, (2013).

\bibitem{Mihalaktree}
Mamageishvili, A. and Mihal{\'{a}}k, M. and M{\"{u}}ller, D.: Tree Nash Equilibria in the Network Creation Game. {\em Internet Mathematics}, \textbf{11}(4-5), pp. 472--486, (2015).

\end{thebibliography}
\end{document}